\newcommand{\F}{\mathcal{F}}
\def\R{{\mathbb{R}}}
\newtheorem{lem}{Lemma}[section]
\newtheorem{prop}{Proposition}[section]
\newtheorem{df}{Definition}[section]
\newtheorem{cor}{Corollary}[section]
\newtheorem{rem}{Remark}[section]
\newcommand{\beq}{\begin{equation}}
\newcommand{\eeq}{\end{equation}}
\newcommand{\bi}{\begin{itemize}}
\newcommand{\bd}{\begin{description}}
\newcommand{\ei}{\end{itemize}}
\newcommand{\ed}{\end{description}}
\newcommand{\bc}{\begin{center}}
\newcommand{\ec}{\end{center}}
\newcommand{\sg}{\sigma}
\newcommand{\al}{\alpha}
\newcommand{\gm}{\gamma}
\newcommand{\ld}{\lambda}
\newcommand{\Ind}[1]{{\bf{1}}_{\{{#1}\}}}
\begin{document}

\title*{Derivative pricing for a multi-curve extension of the Gaussian, exponentially quadratic short rate model}
\titlerunning{Derivative pricing for a multi-curve Gaussian, exponentially quadratic short rate model}
\author{Zorana Grbac and Laura Meneghello and Wolfgang J. Runggaldier}
\institute{Zorana Grbac \at Laboratoire de Probabilit\'es et Mod\`eles Al\'eatoires, Universit\'e Paris Diderot - Paris 7, France \\ \email{grbac@math.univ-paris-diderot.fr}
\and Laura Meneghello  \at Dipartimento di Matematica Pura ed Applicata, Universit\`a di Padova,  Via Trieste 63,  I-35121-Padova \\ Present affiliation: Gruppo Banco Popolare  \textit{Disclaimer: The views, thoughts and opinions expressed in this paper are those of the authors in their individual
capacity and should not be attributed to Gruppo Banco Popolare or to the authors as representatives or employees
of Gruppo Banco Popolare}   \\ \email{meneghello.laura@yahoo.com}
\and Wolfgang J. Runggaldier \at Dipartimento di Matematica Pura ed Applicata, Universit\`a di Padova,  Via Trieste 63,  I-35121-Padova \\ \email{runggal@math.unipd.it}
}
%
%
\maketitle

\abstract{The recent financial crisis has led to so-called
multi-curve models for the term structure.  Here we study a
multi-curve extension of short rate models where, in addition to the
short rate itself, we introduce short rate spreads. In particular,
we consider a Gaussian factor model where the short rate and the
spreads are second order polynomials of Gaussian factor processes.
This leads to an exponentially quadratic model class that is less
well known than the exponentially affine class. In the latter class the factors
enter linearly and for positivity one considers square root
factor processes.  While the square root factors in the affine class
have more involved distributions, in the quadratic class the factors
remain Gaussian and this leads to various advantages, in particular
for derivative pricing. After some preliminaries on martingale
modeling in the multi-curve setup, we concentrate on pricing of
linear and optional derivatives. For linear derivatives, we exhibit
an adjustment factor that allows one to pass from pre-crisis single
curve values to the corresponding post-crisis multi-curve values. }


\noindent{\small\bf Keywords :} multi-curve models, short rate models, short rate spreads, Gaussian exponentially quadratic models, pricing of linear and optional
interest rate derivatives, Riccati equations, adjustment factors.

\section{Introduction}\label{S.0}

The recent financial crisis has heavily impacted the financial
market and the fixed income markets in particular. Key features put
forward by the crisis are counterparty and liquidity/funding risk.
In interest rate derivatives the underlying rates are typically
Libor/Euribor. These are determined by a panel of banks and thus
reflect various risks in the interbank market, in particular
counterparty and liquidity risk. The standard no-arbitrage relations
between Libor rates of different maturities have broken down and
significant spreads have been observed between Libor rates of
different tenors, as well as between Libor and OIS swap rates, where
OIS stands for Overnight Indexed Swap. For more details on this issue see equations \eqref{4}-\eqref{5a} and the paragraph following them, as well as the paper by \citet{BormettiBrigoFrancischelloPallavicini15} and a corresponding version in this volume. This has led practitioners
and academics alike to construct multi-curve models where future cash
flows are generated through curves associated to the underlying
rates (typically the Libor, one for each tenor structure), but are
discounted by another curve.

For the pre-crisis single-curve setup various interest rate models have
been proposed. Some of the standard model classes are: the short
rate models; the instantaneous forward rate models in an
Heath-Jarrow-Morton (HJM) setup; the market forward rate models
(Libor market models). In this paper we consider a possible
multi-curve extension of the short rate model class that, with
respect to the other model classes, has in particular the advantage
of leading more easily to a Markovian structure. Other multi-curve
extensions of short rate models  have appeared in the literature
such as \citet{KijimaTanakaWong09}, \citet{Kenyon10},
\citet{FilipovicTrolle13} and  \citet{MorinoRunggaldier14}. The present paper considers an exponentially quadratic model, whereas the models in the mentioned papers concern mainly the exponentially affine framework, except for \citet{KijimaTanakaWong09} in which the exponentially quadratic models are mentioned. More details on the difference between the exponentially affine and exponentially quadratic short rate models will be provided below.

%

Inspired by a credit risk analogy, but also by a common practice of
deriving multi-curve quantities by adding a spread over the
corresponding single-curve risk-free quantities, we shall consider,
next to the short rate itself, a short rate spread to be added to
the short rate, one for each possible tenor structure. Notice that
these spreads are added from the outset.

To discuss the basic ideas in an as simple as possible way, we
consider just a two-curve model, namely with one curve for
discounting and one for generating future cash flows; in other
words, we shall consider a single tenor structure. We shall thus
concentrate on the short rate $r_t$ and a single short rate spread
$s_t$ and, for their dynamics, introduce a factor model. In the
pre-crisis single-curve setting there are two basic factor model
classes for the short rate: the exponentially affine and the
exponentially quadratic model classes. Here we shall concentrate on
the less common quadratic class with Gaussian factors. In the
exponentially affine class where, to guarantee positivity of rates
and spreads, one considers generally square root models for the
factors, the distribution of the factors is $\chi^2$. In the 
exponentially quadratic class the factors have a more convenient
Gaussian distribution.

The paper is structured as follows. In the preliminary section
\ref{S.1} we mainly discuss issues related to martingale modeling.
In section \ref{S.2} we introduce the multi-curve  Gaussian,
exponentially quadratic model class. In section  \ref{S.3} we deal
with pricing of linear interest rate derivatives and, finally, in
section \ref {S.4} with nonlinear/optional interest rate
derivatives.


\section{Preliminaries}\label{S.1}

\subsection{Discount curve and collateralization.}\label{S.1.1}

In the presence of multiple curves, the choice of the curve for
discounting the future cash flows, and a related choice of the
numeraire for the standard martingale measure used for pricing, in
other words, the question of absence of arbitrage, becomes
non-trivial (see e.g. the discussion in \citet{KijimaMuromachi15}). To avoid issues of arbitrage, one should possibly have
a common discount curve to be applied to all future cash flows
independently of the tenor. A choice, which has been widely accepted
and became practically standard, is given by the OIS-curve $T
\mapsto p(t,T)=p^{OIS}(t,T)$ that can be stripped from OIS rates,
namely the fair rates in an OIS. The arguments justifying this
choice and which are typically evoked in practice, are the fact that
the majority of the traded interest rate derivatives are nowadays
being collateralized and the rate used for remuneration of the
collateral is exactly the overnight rate, which is the rate the OIS
are based on. Moreover, the overnight rate bears very little risk
due to its short maturity  and therefore can be considered
relatively risk-free. In this  context we also point out that
prices, corresponding to fully collateralized transactions, are
considered as clean prices (this terminology was first introduced by
\citet{Crepey11b2} and \citet{BieleckiCrepeyBrigo14}). Since
collateralization is by now applied in the majority of cases, one
may thus ignore counterparty and liquidity risk between individual
parties when pricing interest rate derivatives, but cannot ignore the counterparty and liquidity risk in the interbank market as a whole. These risks are often jointly referred to as interbank risk and they are main drivers of the multi-curve phenomenon, as documented in the literature (see e.g. \citet{CrepeyDouady12}, \citet{FilipovicTrolle13} and \citet{GallitschkeMuellerSeifried14}). 
We shall thus consider only {\sl clean
valuation} formulas, which take into account the multi-curve issue. 
 Possible ways to account for counterparty risk and funding issues between individual counterparties in a contract are, among
others, to follow a global valuation approach  that leads to nonlinear derivative
valuation (see
\citet{BrigoPallaviciniPerini12}, \citet{BrigoMoriniPallavicini12}
and other references therein, and in particular \citet{PallaviciniBrigo13} for a global valuation approach applied specifically to interest rate modeling), or to consider various valuation adjustments that are
generally computed on top of the clean prices (see
\citet{Crepey11b2}). A fully nonlinear valuation is preferable, but
is more difficult to achieve. On the other hand, valuation
adjustments are more consolidated and also used in practice and this
gives a further justification to still look for clean prices. Concerning the explicit role of collateral in the pricing of interest rate derivatives, we refer to the above-mentioned paper by  \citet{PallaviciniBrigo13}.

\subsection{Martingale measures}\label{S.1.2}

The fundamental theorem of asset pricing links the economic
principle of absence of arbitrage with the notion of a martingale
measure. As it is well known, this is a measure, under which the
traded asset prices, expressed in units of a same numeraire, are
local martingales. Models for interest rate markets are typically
incomplete so that absence of arbitrage admits many martingale
measures. A common approach in interest rate modeling is to perform
martingale modeling, namely to model the quantities of interest
directly under a generic martingale measure; one has then to perform
a calibration in order to single out the specific martingale measure
of interest. The modeling under a martingale measure now imposes
some conditions on the model and, in interest rate theory, a typical
such condition is the Heath-Jarrow-Morton (HJM) drift condition.

Starting from the OIS bonds, we shall first derive a suitable
numeraire and then consider as martingale measure a measure $Q$
under which not only the OIS bonds, but also the FRA contracts seen
as basic quantities in the bond market, are local martingales when
expressed in units of the given numeraire. To this basic market one
can then add various derivatives imposing that their prices,
expressed in units of the numeraire, are local martingales under
$Q$.


Having made the choice of the OIS curve $T \mapsto p(t, T)$ as the
discount curve, consider the instantaneous forward rates  $f(t, T):=
-\frac{\partial}{\partial T} \log p(t, T)$ and let
 $r_t=f(t, t)$ be the corresponding short rate at the generic time
$t$. Define the OIS bank account as \beq\label{1} B_t = \exp \left(
\int_0^t r_s ds \right)\eeq and, as usual, the standard martingale
measure $Q$ as the measure, equivalent to the physical measure $P$,
that is associated to the bank account $B_t$ as numeraire. Hence the
arbitrage-free prices of all assets, discounted by $B_t$, have
to be local martingales with respect to $Q$. For derivative pricing,
among them also FRA pricing, it is often more convenient to use,
equivalently, the forward measure $Q^T$ associated to the OIS bond
$p(t,T)$ as numeraire. The two measures $Q$ and $Q^T$ are related by
their Radon-Nikodym density process \beq\label{2}\frac{d\,
Q^{T}}{d\, Q} \Big|_{\mathcal{F}_t}=\frac{p(t, T)}{B_t  p(0, T) }
\qquad 0 \leq t \leq T.\eeq As already mentioned, we shall follow the
traditional {\sl martingale modeling}, whereby the model dynamics
are assigned under the martingale measure $Q$. This leads to
defining the OIS bond prices according to
\begin{equation}\label{3}p(t,T)=E^Q\left\{\exp\left[-\int_t^Tr_udu\right]\mid\F_t\right\}\end{equation}
after having specified the $Q-$dynamics of $r$.

Coming now to the FRA contracts, recall that they concern a forward
rate agreement, established at a time $t$ for a future interval
$[T,T+\Delta]$, where at time $T+\Delta$ the interest corresponding
to a floating rate is received in exchange for the interest corresponding to
a fixed rate $R$. There exist various possible conventions
concerning the timing of the payments. Here we choose payment in
arrears, which in this case means at time $T+\Delta$. Typically, the
floating rate is given by the Libor rate and, having assumed
payments in arrears, we also assume that the rate is fixed at the
beginning of the interval of interest, here at $T$. Recall that for
expository simplicity we had reduced ourselves to a two-curve setup
involving just a single Libor for a given tenor $\Delta$.
The floating rate received at $T+\Delta$ is therefore the
rate $L(T;T,T+\Delta)$, fixed at the inception time $T$. For a unitary notional, and using the $(T+\Delta)-$forward
measure $Q^{T+\Delta}$ as the pricing measure, the arbitrage-free
price at $t\leq T$ of the FRA contract is then
\beq
\label{eq:FRA-price}
P^{FRA}(t;T,T+\Delta,R)=\Delta
p(t,T+\Delta)E^{T+\Delta}\left\{L(T;T,T+\Delta)-R\mid\F_t\right\},
\eeq
where $E^{T+\Delta}$ denotes the expectation with respect to the measure $Q^{T+\Delta}$.
From this expression it follows that the value of the fixed rate $R$
that makes the contract fair at time $t$ is given by
\beq\label{4}R_t=E^{T+\Delta}\left\{L(T;T,T+\Delta)\mid\F_t\right\}:=
L(t;T,T+\Delta)\eeq and we shall call $L(t;T,T+\Delta)$ the {\sl
forward Libor rate}. Note that $L(\cdot;T,T+\Delta)$  is a $Q^{T+\Delta}-$martingale by
construction.

In view of developing a model for $L(T;T,T+\Delta)$, recall that, by
absence of arbitrage arguments, the classical discrete compounding
forward rate at time $t$ for the future time interval $[T,T+\Delta]$
is given by
$$F(t;T,T+\Delta) =
\frac{1}{\Delta}\,\left(\frac{p(t,T)}{p(t,T+\Delta)}-1\right),$$
where $p(t,T)$ represents here the price of a risk-free zero coupon
bond. This expression can be justified also by the fact that it
represents the fair fixed rate in a forward rate agreement, where
the floating rate received at $T+\Delta$ is
\beq\label{5}F(T;T,T+\Delta) =
\frac{1}{\Delta}\,\left(\frac{1}{p(T,T+\Delta)}-1\right)\eeq and we
have \beq\label{5a}
F(t;T,T+\Delta)=E^{T+\Delta}\left\{F(T;T,T+\Delta)\mid\F_t\right\}.\eeq
This makes the forward rate coherent with the risk-free bond prices,
where the latter represent the expectation of the market concerning
the future value of money.

Before the financial crisis, $L(T;T,T+\Delta)$ was assumed to be
equal to $F(T;T,T+\Delta)$, an assumption that allowed for various
simplifications in the determination of derivative prices.  After
the crisis $L(T;T,T+\Delta)$ is no longer equal to $F(T;T,T+\Delta)$
and what one considers for $F(T;T,T+\Delta)$ is
in fact the {\sl OIS discretely compounded rate}, which is based on the OIS bonds, even though the
OIS bonds are not necessarily equal to the risk-free bonds (see sections 1.3.1 and 1.3.2 of \citet{GrbacRunggaldier15} for more details on this issue). In
particular, the Libor rate $L(T;T,T+\Delta)$ cannot be expressed by
the right hand side of (\ref{5}). The fact that
$L(T;T,T+\Delta)\not=F(T;T,T+\Delta)$ implies by (\ref{4}) and
(\ref{5a}) that also $L(t;T,T+\Delta)\not=F(t;T,T+\Delta)$ for all
$t\le T$ and this leads to a {\sl Libor-OIS spread}
$L(t;T,T+\Delta)-F(t;T,T+\Delta)$.

Following some of the recent literature (see e.g.
\citet{KijimaTanakaWong09}, \citet{CrepeyGrbacNguyen12},
\citet{FilipovicTrolle13}), one possibility is now to keep the
classical relationship (\ref{5}) also for $L(T;T,T+\Delta)$ thereby
replacing however the bonds $p(t,T)$ by fictitious risky ones $\bar
p(t,T)$ that are assumed to be affected by the same factors as the
Libor rates. Such a bond  can be seen as an average bond issued by a
representative bank from the Libor group and it is therefore
sometimes referred to in the literature as a {\sl Libor bond}. This
leads to
\beq\label{6}L(T;T,T+\Delta)=\displaystyle\frac{1}{\Delta}\,\left(\frac{1}{\bar
p(T,T+\Delta)}-1\right).\eeq Recall that, for simplicity of
exposition, we consider a single Libor for a single tenor $\Delta$
and so also a single fictitious bond. In general, one has one Libor
and one fictitious bond for each tenor, i.e.
$L^{\Delta}(T;T,T+\Delta)$ and $\bar p^{\Delta}(T,T+\Delta)$.  Note
that we shall model the bond prices $\bar p(t, T)$, for all $t$ and
$T$ with $t\leq T$, even though only the prices $\bar
p(T,T+\Delta)$, for all $T$, are needed in relation (\ref{6}).
Moreover, keeping in mind that the bonds $\bar p(t,T)$ are
fictitious, they do not have to satisfy the boundary condition $\bar
p(T,T)=1$, but we still assume this  condition in order to simplify
the modeling.

To derive a dynamic model for $L(t;T,T+\Delta)$, we may now derive a
dynamic model for $\bar p(t,T+\Delta)$, where we have to keep in
mind that the latter is not a traded quantity. Inspired by a
credit-risk analogy, but also by a common practice of deriving
multi-curve quantities by adding a spread over the corresponding
single-curve (risk-free) quantities, which in this case is the short
rate $r_t$, let us define then the Libor (risky) bond prices as
\beq\label{71}\bar
p(t,T)=E^Q\left\{\exp\left[-\int_t^T(r_u+s_u)du\right]\mid\F_t\right\},\eeq
with $s_t$ representing the short rate spread. In case of default
risk alone, $s_t$ corresponds to the hazard rate/default intensity,
but here it corresponds more generally to all the factors affecting
the Libor rate, namely besides credit risk, also liquidity risk etc.
Notice also that the spread is introduced here from the outset.
Having for simplicity considered a single tenor $\Delta$ and thus a
single $\bar p(t,T)$, we shall also consider only a single spread
$s_t$. In general, however, one has a spread $s_t^{\Delta}$ for each
tenor $\Delta$.

We need now a dynamical model for both $r_t$ and $s_t$ and we shall
define this model directly under the martingale measure $Q$ ({\sl
martingale modeling}).

\section{Short rate model}\label{S.2}

\subsection{The model}\label{S.2.1}
As mentioned, we shall consider a dynamical model for $r_t$ and the
single spread $s_t$ under the martingale measure $Q$ that, in
practice, has to be calibrated to the market. For this purpose we
shall consider a factor model with several factors driving $r_t$ and
$s_t$.

The two basic factor model classes for the short rate in the pre-crisis single-curve setup, namely the exponentially affine
and the exponentially quadratic model classes, both allow
for flexibility and analytical tractability and this in turn allows
for closed or semi-closed formulas for linear and optional interest
rate derivatives. The former class is usually better known than the
latter, but the latter has its own advantages. In fact, for the
exponentially affine class one would consider $r_t$ and $s_t$ as
given by a linear combination of the factors and so, in order to
obtain positivity, one has to consider a square root model for the
factors. On the other hand, in the Gaussian exponentially quadratic
class, one considers mean reverting Gaussian factor models, but at
least some of the factors in the linear combination for $r_t$ and
$s_t$ appear as a square. In this way the distribution of the
factors remains always Gaussian; in a square-root model it is a
non-central $\chi^2-$distribution. Notice also that the
exponentially quadratic models can be seen as dual to the square
root exponentially affine models.

In the pre-crisis single-curve setting, the exponentially quadratic 
models have been considered  e.g. in  
\citet{ElKarouiMyneniViswanathan92}, \citet{Pelsser97},
\citet{GombaniRunggaldier01}, \citet{LeippoldWu02}, \citet{ChenFilipovicPoor04},
and \citet{Gaspar04}. However, since the pre-crisis exponentially affine models are
more common, there have also been more attempts to extend them to a
post-crisis multi-curve setting (for an overview and details see e.g.
\citet{GrbacRunggaldier15}). A first extension of exponentially
quadratic models to a multi-curve setting can be found in
\citet{KijimaTanakaWong09} and the present paper is devoted to a possibly
full extension.

Let us now present the model for $r_t$ and $s_t$, where
we consider not only the short rate $r_t$ itself, but also its
spread $s_t$ to be given by a linear combination of the factors,
where at least some of the factors appear as a square. To keep the
presentation simple, we shall consider a small number of factors
and, in order to model also a possible correlation between $r_t$ and
$s_t$, the minimal number of factors is three. It also follows from
some of the econometric literature that a small number of factors
may suffice to adequately model most situations (see also \citet{Duffee99} and \citet{DuffieGarleanu01}).

Given three independent affine factor processes
$\Psi^i_t,\>i=1,2,3$, having under $Q$ the Gaussian dynamics
\beq\label{7} d\Psi_t^i=-b^i\Psi_t^idt+\sg^i\,dw_t^i,\quad
i=1,2,3,\eeq with $b_i, \sigma_i > 0$ and  $w_t^i$, $i=1, 2, 3$,
independent $Q-$Wiener processes, we let
\beq\label{8}\left\{\begin{array}{lcl}
r_t&=&\Psi^1_t+(\Psi^2_t)^2\\
s_t&=&\kappa\Psi_t^1+(\Psi_t^3)^2\end{array}\right.,\eeq where
$\Psi_t^1$ is the common systematic factor allowing for
instantaneous correlation between $r_t$ and $s_t$ with correlation
intensity $\kappa$ and  $\Psi_t^2$ and $\Psi_t^3$ are the
idiosyncratic factors. Other factors may be added to drive $s_t$,
but the minimal model containing common and idiosyncratic components
requires three factors, as explained above. The common factor is particularly important
because we want to take into account the realistic feature of
non-zero correlation between $r_t$ and $s_t$ in the model.

\begin{rem}\label{R.1} The zero mean-reversion level is here
considered only for convenience of simpler formulas, but can be
easily taken to be positive, so that short rates and spreads can
become negative only with small probability (see \citet{KijimaMuromachi15} for an alternative representation of the spreads in terms of Gaussian factors that guarantees the spreads to remain nonnegative and still allows for correlation between $r_t$ and $s_t$). Note, however, that given the current market situation where the observed interest rates are very close to zero and sometimes also negative, even models with negative mean-reversion level have been considered, as well as models allowing for regime-switching in the mean reversion parameter.
\end{rem}
\begin{rem}\label{R.2}  For the short rate itself one could also consider the
model $r_t=\phi_t+\Psi^1_t+(\Psi^2_t)^2$ where $\phi_t$ is a {\sl deterministic shift extension} (see \citet{BrigoMercurio06})
that allows for a good fit to the initial term structure in short
rate models even with constant model parameters.\end{rem}

In the model (\ref{8}) we have included a linear term $\Psi^1_t$
which may lead to negative values of rates and spreads, although
only with small probability in the case of models of the type
(\ref{7}) with a positive mean reversion level. The advantage of
including this linear term is more generality and flexibility in the
model. Moreover, it allows to express $\bar p(t,T)$ in terms of
$p(t,T)$ multiplied by a factor. This property will lead  to an {\sl
adjustment factor} by which one can express post-crisis quantities
in terms of corresponding pre-crisis quantities, see
\citet{MorinoRunggaldier14} in which this idea has been firstly
proposed in the context of exponentially affine short rate models
for multiple curves.

\subsection{Bond prices (OIS and Libor bonds)}\label{S.2.2}

In this subsection we derive explicit pricing formulas for the OIS
bonds $p(t,T)$ as defined in (\ref{3}) and  the fictitious Libor
bonds $\bar p(t,T)$ as defined in (\ref{71}). Thereby, $r_t$ and
$s_t$ are supposed to be given by (\ref{8}) with the factor
processes $\Psi^i_t$ evolving under the standard martingale measure
$Q$ according to (\ref{7}). Defining the matrices \beq\label{dynam}
\begin{array}{cc}
F=
\begin{bmatrix}
-b^1&0&0\\
0&-b^2&0\\
0&0&-b^3
\end{bmatrix}\>,\quad&
D=
\begin{bmatrix}
\sigma^1&0&0\\
0&\sigma^2&0\\
0&0&\sigma^3
\end{bmatrix}\end{array}\eeq
and considering the vector factor process
$\Psi_t:=[\Psi_t^1,\Psi_t^2,\Psi_t^3]'$ as well as the multivariate
Wiener process $W_t:=[w_t^1, w_t^2, w_t^3]'$, where $'$ denotes
transposition, the dynamics (\ref{7}) can be rewritten in synthetic
form as \beq\label{Psi} d\Psi_t=F\Psi_tdt+DdW_t.\eeq

Using results on exponential quadratic term structures (see
\citet{GombaniRunggaldier01}, \citet{Filipovic2002}), we have
\begin{align}
\label{p}
 \notag p(t,T)&= E^Q\Bigl\{e^{-\int_t^Tr_u du}\Big|\mathcal{F}_t\Bigr\}=E^Q\Bigl\{e^{-\int_t^T(\Psi^1_u+(\Psi^2_u)^2)du}\Big|\mathcal{F}_t\Bigr\}\\
&= \mathrm{exp}\Bigl[-A(t,T)-B'(t,T)\Psi_t-\Psi'_tC(t,T)\Psi_t\Bigr]
\end{align}
and, setting $R_t:=r_t + s_t$,
\begin{align}
\label{barp}
\notag  \bar p(t,T)&= E^Q\Bigl\{e^{-\int_t^T R_u du}\Big|\mathcal{F}_t\Bigr\}
=E^Q\Bigl\{e^{-\int_t^T((1+\kappa)\Psi^1_u+(\Psi^2_u)^2+(\Psi^3_u)^2)du}\Big|\mathcal{F}_t\Bigr\}\\
 &= \mathrm{exp}\Bigl[-\bar A(t,T)-\bar
B'(t,T)\Psi_t-\Psi'_t\bar C(t,T)\Psi_t\Bigr],
\end{align}
where $A(t,T)$, $\bar A(t,T)$, $B(t,T)$, $\bar B(t,T)$, $C(t,T)$ and
$\bar C(t,T)$ are scalar, vector and matrix-valued deterministic
functions to be determined. 

For this purpose we recall the Heath-Jarrow-Morton (HJM) approach
for the case when $p(t,T)$ in (\ref{p}) represents the price of a
risk-free zero coupon bond. The HJM approach leads to the so-called
HJM drift conditions that impose conditions on the coefficients in
(\ref{p}) so that the resulting prices $p(t,T)$ do not imply
arbitrage possibilities. Since the risk-free bonds are traded, the
no-arbitrage condition is expressed by requiring $\frac{p(t,
T)}{B_t}$ to be a $Q-$martingale for $B_t$ defined as in (\ref{1}) and
it is exactly this martingality property to yield the drift
condition. In our case, $p(t,T)$ is the price of an OIS bond that is
not necessarily traded and in general does  not coincide with the price of 
a risk-free bond. However, whether the OIS bond is traded or not,
$\frac{p(t, T)}{B_t}$ is a $Q-$martingale by the very definition of
$p(t,T)$ in (\ref{p}) (see the first equality  in (\ref{p}))
and so we can follow the same HJM approach to obtain conditions on
the coefficients in (\ref{p}) also in our case.

For what concerns, on the other hand, the coefficients in
(\ref{barp}), recall that $\bar p(t, T)$ is a fictitious asset that
is not traded and thus is not subject to any no-arbitrage condition.
Notice, however, that by analogy to $p(t, T)$ in (\ref{p}), by its very definition given in the first equality in  
(\ref{barp}), $\frac{\bar p(t, T)}{\bar B_t}$  is a $Q-$martingale
for $\bar B_t$ given by $\bar B_t := \exp \int_0^t R_u du$. The two
cases $p(t, T)$ and $\bar p(t, T)$ can thus be treated in complete
analogy provided that we use for $\bar p(t, T)$ the numeraire $\bar
B_t$.


We shall next derive from the $Q-$martingality of $\frac{p(t,
T)}{B_t}$ and $\frac{\bar p(t, T)}{\bar B_t}$ conditions on the
coefficients in (\ref{p}) and (\ref{barp}) that correspond to the
classical HJM drift condition and lead thus to ODEs for these
coefficients. For this purpose we shall proceed by analogy to
section 2 in \citet{GombaniRunggaldier01}, in particular to the
proof of Proposition 2.1 therein, to which we also refer for more
detail.



Introducing the ``instantaneous forward rates''
$f(t,T):=-\frac{\partial}{\partial T}\log p(t,T)$  and  $\bar
f(t,T):=-\frac{\partial}{\partial T}\log \bar p(t,T),$ and setting
\begin{equation}\label{coeff}
a(t,T):=\frac{\partial}{\partial T}A(t,T)\,,\quad
b(t,T):=\frac{\partial}{\partial T}B(t,T)\,,\quad
c(t,T):=\frac{\partial}{\partial T}C(t,T)
\end{equation}
and  analogously for $\bar a(t,T), \bar b(t,T), \bar c(t,T)$, from
(\ref{p}) and (\ref{barp}) we obtain
\begin{equation}\label{f}
f(t,T)=a(t,T)+b'(t,T)\Psi_t+\Psi'_tc(t,T)\Psi_t,
\end{equation}
\begin{equation}\label{barf}
\bar f(t,T)=\bar a(t,T)+\bar b'(t,T)\Psi_t+\Psi'_t\bar c(t,T)\Psi_t.
\end{equation}
Recalling that $r_t=f(t, t)$ and $R_t=\bar f(t, t)$, this implies,
with $a(t):=a(t,t), b(t):=b(t,t), c(t):=c(t,t)$ and analogously for
the corresponding quantities with a bar, that
\begin{equation}\label{r-quadratic}
r_t=a(t)+b'(t)\Psi_t+\Psi'_t c(t)\Psi_t
\end{equation}
and
\begin{equation}\label{s-quadratic}
R_t=r_t+s_t=\bar a(t)+\bar b'(t)\Psi_t+\Psi'_t\bar c(t)\Psi_t.
\end{equation}
Comparing (\ref{r-quadratic}) and (\ref{s-quadratic}) with
(\ref{8}), we obtain the following conditions where $i,j=1,2,3$,
namely
$$\begin{array}{cc}\begin{cases}
a(t)=0\\
b^i(t)=\Ind{i=1}\\
c^{ij}(t)=\Ind{i=j=2}
\end{cases}\quad& \begin{cases}
\bar a(t)=0\\
\bar b^i(t)=(1+\kappa)\Ind{i=1}\\
\bar c^{ij}(t)=\Ind{i=j=2\}\cup\{i=j=3}.
\end{cases}\end{array}$$

Using next the fact that
$$p(t,T)=\exp\left[-\displaystyle\int_t^Tf(t,s)ds\right]\,,\quad \bar p(t,T)=\exp\left[-\displaystyle\int_t^T\bar
f(t,s)ds\right],$$ and imposing $\frac{p(t, T)}{B_t}$ and
$\frac{\bar p(t, T)}{\bar B_t}$ to be $Q-$martingales, one obtains
ordinary differential equations to be satisfied by $c(t,T), b(t,T),
a(t,T)$  and analogously for the quantities with a bar. Integrating these ODEs with respect to the second variable
and recalling (\ref{coeff}) one obtains (for the details see the
proof of Proposition 2.1 in \citet{GombaniRunggaldier01})
\begin{equation}\label{system}
\begin{cases}
C_t(t,T)+2FC(t,T)-2C(t,T)DDC(t,T)+c(t)=0,\quad &C(T,T)=0\\
\bar C_t(t,T)+2F\bar C(t,T)-2\bar C(t,T)DD\bar C(t,T)+\bar c(t)=0,\quad &\bar C(T,T)=0\\
\end{cases}
\end{equation}
with
\begin{equation}
c(t)=
\begin{bmatrix}
0&0&0\\
0&1&0\\
0&0&0
\end{bmatrix}
\quad \bar c(t)=
\begin{bmatrix}
0&0&0\\
0&1&0\\
0&0&1
\end{bmatrix}.
\end{equation}
The special forms of $F$, $D$, $c(t)$ and $\bar c(t)$ together with
boundary conditions $ C(T,T)=0$ and $\bar C(T,T)=0 $ imply that only
$C^{22},\bar C^{22},\bar C^{33}$ are non-zero and satisfy
\begin{equation}\label{C^22}
\begin{cases}
C^{22}_t(t,T)-2b^2C^{22}(t,T)-2(\sigma^2)^2(C^{22}(t,T))^2+1=0,\quad& C^{22}(T,T)=0\\
\bar C^{22}_t(t,T)-2b^2\bar C^{22}(t,T)-2(\sigma^2)^2(\bar C^{22}(t,T))^2+1=0,\quad& \bar C^{22}(T,T)=0\\
\bar C^{33}_t(t,T)-2b^3\bar C^{33}(t,T)-2(\sigma^3)^2(\bar
C^{33}(t,T))^2+1=0,\quad& \bar C^{33}(T,T)=0
\end{cases}
\end{equation}
that can be shown to have as solution
\begin{equation}\label{eqdiff}
\begin{cases}
C^{22}(t,T)=\bar C^{22}(t,T)=\frac{2(e^{(T-t)h^2}-1)}{2h^2+(2b^2+h^2)(e^{(T-t)h^2}-1)}\\
\bar
C^{33}(t,T)=\frac{2(e^{(T-t)h^3}-1)}{2h^3+(2b^3+h^3)(e^{(T-t)h^3}-1)}
\end{cases}
\end{equation}
with $h^i=\sqrt{4(b^i)^2+8(\sigma^i)^2}>0,\,i=2,3.$

Next, always by analogy to the proof of Proposition 2.1 in
\citet{GombaniRunggaldier01}, the vectors of coefficients $B(t,T)$
and $\bar B(t,T)$ of the first order terms can be seen to satisfy
the following system
\begin{equation}\label{system2}
\begin{cases}
B_t(t,T)+B(t,T)F-2B(t,T)DDC(t,T)+b(t)=0,\quad&B(T,T)=0\\
\bar B_t(t,T)+\bar B(t,T)F-2\bar B(t,T)DD\bar C(t,T)+\bar
b(t)=0,\quad&\bar B(T,T)=0
\end{cases}
\end{equation}
with
$$b(t)=[1,0,0]\quad\bar b(t)=[(1+\kappa),0,0].$$
Noticing similarly as above that only $B^1(t,T),\bar B^1(t,T)$ are
non-zero, 
system (\ref{system2}) becomes
\begin{equation}
\begin{cases}
B^1_t(t,T)-b^1B^1(t,T)+1=0\quad&B^1(T,T)=0\\
\bar B^1_t(t,T)-b^1\bar B^1(t,T)+(1+\kappa)=0\quad&\bar B^1(T,T)=0
\end{cases}
\end{equation}
leading to the explicit solution
\begin{equation}\label{B^1}
\begin{cases}
B^1(t,T)=\frac{1}{b^1}\Bigl(1 - e^{-b^1(T-t)}\Bigr)\\
\bar
B^1(t,T)=\frac{1+\kappa}{b^1}\Bigl(1 - e^{-b^1(T-t)}\Bigr)=(1+\kappa)B^1(t,T).
\end{cases}
\end{equation}
\\\\
Finally, $A(t,T)$ and $\bar A(t,T)$ have to satisfy
\begin{equation}\label{A}
\begin{cases}
A_t(t,T)+(\sigma^2)^2C^{22}(t,T)-\frac{1}{2}(\sigma^1)^2(B^1(t,T))^2=0,\\
\bar
A_t(t,T)+(\sigma^2)^2 \bar C^{22}(t,T)+(\sigma^3)^2 \bar  C^{33}(t,T)-\frac{1}{2}(\sigma^1)^2(\bar  B^1(t,T))^2=0
\end{cases}
\end{equation}
with boundary conditions $A(T,T)=0,\bar A(T,T)=0.$ The explicit expressions can be obtained simply by integrating the above equations.

Summarizing, we have proved the following
\begin{prop}
\label{p:bond-prices}
Assume that the OIS short rate $r$ and the spread $s$ are given by (\ref{8}) with the factor processes $\Psi^i_t$, $i=1, 2, 3$, evolving according to (\ref{7}) under the standard martingale measure $Q$. The time-$t$ price of the OIS bond $p(t,T)$, as defined in (\ref{3}), is given by
\begin{equation}\label{pt}
p(t,T)=\mathrm{exp}[-A(t,T)-B^1(t,T)\Psi^1_t-C^{22}(t,T)(\Psi^2_t)^2],
\end{equation}
and the time-$t$ price of the fictitious Libor bond  $\bar p(t,T)$, as defined in (\ref{71}), by
\begin{equation}\label{barp4}
\begin{split}
\bar p(t,T)&=\mathrm{exp}[-\bar A(t,T)-(\kappa+1)B^1(t,T)\Psi^1_t-C^{22}(t,T)(\Psi^2_t)^2-\bar C^{33}(t,T)(\Psi^3_t)^2]\\
&=p(t,T)\mathrm{exp}[-\tilde A(t,T)-\kappa B^1(t,T)\Psi^1_t-\bar
C^{33}(t,T)(\Psi^3_t)^2],
\end{split}
\end{equation}
where $\tilde A(t,T):=\bar A(t,T)-A(t,T)$ with  $A(t,T)$ and $\bar A(t,T)$ given by (\ref{A}), $B^1(t,T)$ given by (\ref{B^1}) and  $C^{22}(t,T)$ and $C^{33}(t,T)$ given by (\ref{eqdiff}).
\end{prop}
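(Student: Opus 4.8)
The plan is to carry the Heath--Jarrow--Morton bookkeeping sketched before the statement all the way through to the explicit coefficients. First I would postulate the exponential-quadratic forms (\ref{p}) and (\ref{barp}) for $p(t,T)$ and $\bar p(t,T)$ with undetermined deterministic coefficients $A,B,C$ and $\bar A,\bar B,\bar C$. Differentiating in $T$ yields the forward-rate representations (\ref{f})--(\ref{barf}); evaluating at $T=t$, using $r_t=f(t,t)$ and $R_t=\bar f(t,t)$, and matching against the postulated dynamics (\ref{8}) fixes the diagonal ``initial'' data $a(t),b(t),c(t)$ and their barred analogues, as recorded just after (\ref{s-quadratic}).

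Next I would exploit the martingale property: $p(t,T)/B_t$ is a $Q$-martingale by the very definition (\ref{3}) of $p(t,T)$, and $\bar p(t,T)/\bar B_t$ is a $Q$-martingale for the numeraire $\bar B_t=\exp\int_0^t R_u\,du$ by (\ref{71}). Applying It\^o's formula to $\exp[-\int_t^T f(t,s)\,ds]/B_t$ with $\Psi$ following (\ref{Psi}) and setting the drift to zero produces, exactly as in the proof of Proposition~2.1 of \citet{GombaniRunggaldier01}, the matrix Riccati system (\ref{system}) for $C,\bar C$, the linear system (\ref{system2}) for $B,\bar B$, and the integral conditions (\ref{A}) for $A,\bar A$, all with zero terminal data at $T=t$. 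Since $F$ and $D$ are diagonal and the source terms $c(t),\bar c(t)$ have only the $(2,2)$, respectively $(2,2)$ and $(3,3)$, entries nonzero, inspecting the equations component-wise shows that off-diagonal entries of $C,\bar C$ stay zero, that $C^{11}=C^{33}=\bar C^{11}\equiv 0$, and that among the first-order coefficients only $B^1,\bar B^1$ survive; this collapses the systems to the scalar Riccati equations (\ref{C^22}) and the two scalar linear ODEs preceding (\ref{B^1}).

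Solving the scalar Riccati equations (\ref{C^22}) by the usual substitution that linearizes them (or simply verifying the claimed formulas by differentiation) gives (\ref{eqdiff}) with $h^i=\sqrt{4(b^i)^2+8(\sigma^i)^2}$; in particular $C^{22}$ and $\bar C^{22}$ solve the same equation with the same terminal condition, hence coincide. The linear ODEs integrate at once to (\ref{B^1}), from which $\bar B^1=(1+\kappa)B^1$ is immediate, and integrating (\ref{A}) from $t$ to $T$ with $A(T,T)=\bar A(T,T)=0$ yields $A$ and $\bar A$ in closed form. Substituting these coefficients into (\ref{p}) and (\ref{barp}) gives (\ref{pt}) and the first line of (\ref{barp4}); the factorized second line follows by writing $\bar A=A+\tilde A$ and $(\kappa+1)B^1=B^1+\kappa B^1$ and cancelling the common factor $\exp[-A(t,T)-B^1(t,T)\Psi^1_t-C^{22}(t,T)(\Psi^2_t)^2]=p(t,T)$.

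The routine but slightly delicate step is the drift computation behind (\ref{system})--(\ref{A}), namely separating the quadratic, linear, and constant terms in $\Psi$ after differentiating the exponential-quadratic expression; but this is done in detail in \citet{GombaniRunggaldier01}, so the only genuinely new verification here is that the sparsity of $F$, $D$, $c(t)$, $\bar c(t)$ reduces the matrix equations to the three scalar Riccati equations and two scalar linear equations, and that their solutions are precisely (\ref{eqdiff})--(\ref{B^1}). Granting that, the factorization of $\bar p(t,T)$ in terms of $p(t,T)$ is a one-line rearrangement.
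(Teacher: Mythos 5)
Your proposal is correct and follows essentially the same route as the paper: postulating the exponential-quadratic forms, matching $r_t=f(t,t)$ and $R_t=\bar f(t,t)$ against (\ref{8}), deriving the Riccati and linear systems from the $Q$-martingale property of $p(t,T)/B_t$ and $\bar p(t,T)/\bar B_t$ via the argument of Proposition~2.1 in \citet{GombaniRunggaldier01}, exploiting the sparsity of $F$, $D$, $c(t)$, $\bar c(t)$ to reduce to the scalar equations (\ref{C^22}) and the ODEs for $B^1,\bar B^1$, and reading off the factorization of $\bar p(t,T)$ from $\bar A=A+\tilde A$ and $\bar B^1=(1+\kappa)B^1$. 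No gaps.
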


In particular, expression (\ref{barp4}) gives $\bar p(t,T)$ in terms of $p(t,T)$. Based on this  we shall derive   in the following section the announced {\sl adjustment
factor} allowing to pass from pre-crisis quantities  to the corresponding  post-crisis quantities.

\subsection{Forward measure}\label{S.3.1}

The underlying factor model was defined in (\ref{7}) under the
standard martingale measure $Q$. For derivative prices, which we
shall determine in the following two sections, it will be convenient
to work under forward measures, for which, given the single tenor
$\Delta$, we shall consider a generic $(T+\Delta)$-forward measure.
The density process to change the measure from $Q$ to $Q^{T+\Delta}$
is \beq\label{density}{\cal
L}_t:=\displaystyle\frac{d\,Q^{T+\Delta}}{d\,Q}
\Big|_{\F_t}=\frac{p(t,T+\Delta)}{p(0,T+\Delta)}\,\frac{1}{B_t}\eeq
from which it follows by (\ref{pt}) and the martingale property of
$\left(\frac{p(t, T+\Delta)}{B_t}\right)_{t\leq T+\Delta}$ that
$$d{\cal L}_t={\cal
L}_t\,\left(-B^1(t,T+\Delta)\sg^1dw_t^1-2C^{22}(t,T+\Delta)\Psi^2_t\sg^2dw_t^2\right).
$$
This implies by Girsanov's theorem that
\beq\label{9}\left\{\begin{array}{lcl}
dw_t^{1,T+\Delta}&=&dw_t^1+\sg^1B^1(t,T+\Delta)dt\\
dw_t^{2,T+\Delta}&=&dw_t^2+2C^{22}(t,T+\Delta)\Psi^2_t\sg^2dt\\
dw_t^{3,T+\Delta}&=&dw_t^3\end{array}\right.\eeq are
$Q^{T+\Delta}-$Wiener processes. From the $Q-$dynamics (\ref{7}) we
then obtain the following $Q^{T+\Delta}-$dynamics for the factors
\beq\label{10} \begin{array}{lcl} d\Psi_t^1&=&-\left[b^1
\Psi_t^1+(\sg^1)^2B^1(t,T+\Delta)\right]\,dt+\sg^1dw_t^{1,T+\Delta}\\
d\Psi_t^2&=&-\left[b^2
\Psi_t^2+2(\sg^2)^2C^{22}(t,T+\Delta)\Psi^2_t\right]\,dt+\sg^2dw_t^{2,T+\Delta}\\
d\Psi_t^3&=&-b^3 \Psi_t^3dt+\sg^3dw_t^{3,T+\Delta}.\end{array}\eeq
\begin{rem}\label{mean-rev}
While in the dynamics (\ref{7}) for $\Psi_t^i,\>(i=1,2,3)$ under $Q$
we had for simplicity assumed a zero mean-reversion level, under the
$(T+\Delta)$-forward measure the mean-reversion level is for
$\Psi_t^1$  now different from zero due to the measure
transformation.\end{rem}

\begin{lem}\label{mart} Analogously to the case when $p(t,T)$
represents the price of a risk-free zero coupon bond, also for
$p(t,T)$ viewed as OIS bond we have that
$\frac{p(t,T)}{p(t,T+\Delta)}$ is a
$Q^{T+\Delta}-$martingale.\end{lem}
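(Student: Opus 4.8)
\emph{Proof plan.} The plan is to deduce the claim from the abstract Bayes rule, using only the fact that $\bigl(p(t,T)/B_t\bigr)_{t\le T}$ is a $Q$-martingale. The latter holds by the very definition \eqref{3} of $p(t,T)$ (equivalently, by the first equality in \eqref{p}), whether or not the OIS bond is a traded asset --- exactly the observation already made for $p(\cdot,T)/B_\cdot$ and $\bar p(\cdot,T)/\bar B_\cdot$ in the discussion following \eqref{barp}. In particular, no no-arbitrage argument for $p(\cdot,T)$ itself is invoked; this is the one conceptual point to be careful about.

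Concretely, I would fix $t\le u\le T$ and, writing $\mathcal{L}$ for the density process \eqref{density}, apply the Bayes formula for conditional expectations:
\[
E^{T+\Delta}\!\left[\frac{p(u,T)}{p(u,T+\Delta)}\,\Big|\,\mathcal{F}_t\right]
=\frac{1}{\mathcal{L}_t}\,E^{Q}\!\left[\mathcal{L}_u\,\frac{p(u,T)}{p(u,T+\Delta)}\,\Big|\,\mathcal{F}_t\right].
\]
Substituting $\mathcal{L}_u=\dfrac{p(u,T+\Delta)}{p(0,T+\Delta)\,B_u}$, the factor $p(u,T+\Delta)$ cancels, and invoking the $Q$-martingale property of $p(\cdot,T)/B_\cdot$ gives
\[
\frac{1}{\mathcal{L}_t\,p(0,T+\Delta)}\,E^{Q}\!\left[\frac{p(u,T)}{B_u}\,\Big|\,\mathcal{F}_t\right]
=\frac{1}{\mathcal{L}_t\,p(0,T+\Delta)}\cdot\frac{p(t,T)}{B_t}
=\frac{p(t,T)}{p(t,T+\Delta)},
\]
the last equality following by inserting $\mathcal{L}_t=\dfrac{p(t,T+\Delta)}{p(0,T+\Delta)\,B_t}$. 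Integrability comes for free by the same computation: $E^{T+\Delta}\bigl[p(t,T)/p(t,T+\Delta)\bigr]=E^{Q}\bigl[p(t,T)/(p(0,T+\Delta)B_t)\bigr]=p(0,T)/p(0,T+\Delta)<\infty$.

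There is thus no genuine obstacle in this route --- the whole content is the definitional martingality of $p(\cdot,T)/B_\cdot$ plus a numeraire change. As an alternative one could instead express $p(t,T)/p(t,T+\Delta)$ via \eqref{pt} as the exponential of a quantity affine-quadratic in $\Psi_t$, apply It\^o's formula under the $Q^{T+\Delta}$-dynamics \eqref{10}, and verify that the drift vanishes using the ODEs \eqref{system}, \eqref{system2} and \eqref{A} written for both maturities $T$ and $T+\Delta$; this is correct but more computational and less transparent, so I would present the change-of-numeraire argument above.
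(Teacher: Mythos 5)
Your proposal is correct and follows essentially the same route as the paper: both rest on the abstract Bayes rule together with the definitional $Q$-martingality of $p(\cdot,T)/B_\cdot$ (and of $p(\cdot,T+\Delta)/B_\cdot$, which you use implicitly through the density process $\mathcal{L}_u$ and the paper uses explicitly via a tower-property step). Your version is marginally tidier in that it verifies the martingale identity at a general intermediate time $u$ rather than only at $u=T$, but the substance is identical.
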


\begin{proof} We have seen that also for OIS bonds as defined in
(\ref{3}) we have that, with $B_t$ as in (\ref{1}), the ratio
$\frac{p(t,T)}{B_t}$ is a $Q-$martingale. From Bayes' formula we
then have
$$\begin{array}{l}
E^{T+\Delta}\left\{\frac{p(T,T)}{p(T,T+\Delta)}\mid\F_t\right\}
=\frac{E^{Q}\left\{\frac{1}{p(0,T+\Delta)}\frac{1}{B_{T+\Delta}}\frac{p(T,T)}{p(T,T+\Delta)}\mid\F_t\right\}}
{E^{Q}\left\{\frac{1}{p(0,T+\Delta)}\frac{1}{B_{T+\Delta}}\mid\F_t\right\}}\\
\\
\quad=\frac{E^{Q}\left\{\frac{p(T,T)}{p(T,T+\Delta)}E^Q\left\{\frac{1}{B_{T+\Delta}}\mid\F_T\right\}\mid\F_t\right\}}
{\frac{p(t,T+\Delta)}{B_t}}=\frac{B_t
E^{Q}\left\{\frac{p(T,T)}{p(T,T+\Delta)}\frac{p(T,T+\Delta)}{B_{T}}\mid\F_t\right\}}
{p(t,T+\Delta)}\\ \\\quad=\frac{B_t
E^{Q}\left\{\frac{p(T,T)}{B_T}\mid\F_t\right\}}
{p(t,T+\Delta)}=\frac{p(t,T)}{p(t,T+\Delta}),\end{array}$$ thus
proving the statement of the lemma. \qed  \end{proof}

We recall that we denote the expectation with respect to the measure
$Q^{T+\Delta}$ by $E^{T+\Delta}\{\cdot\}$.  The dynamics in
(\ref{10}) lead to Gaussian distributions for $\Psi_t^i,\>i=1,2,3$
that, given $B^1(\cdot)$ and $C^{22}(\cdot)$, have mean and variance
$$E^{T+\Delta}\{\Psi_t^i\}=\bar\al_t^i=\bar\al_t^i(b^i,\sg^i)\quad,\quad
Var^{T+\Delta}\{\Psi_t^i\}=\bar\beta_t^i=\bar\beta_t^i(b^i,\sg^i),$$
which can be explicitly computed. More precisely, we have
\beq\label{11}
\begin{cases}
\bar\alpha_t^1&=e^{-b^1t}\Bigl[\Psi^1_0-\frac{(\sigma^1)^2}{2(b^1)^2}e^{-b^1(T+\Delta)}(1-e^{2b^1t})-\frac{(\sigma^1)^2}
{(b^1)^2}(1-e^{b^1t})\Bigr]\\
\bar\beta_t^1&=e^{-2b^1t}(e^{2b^1t}-1)\frac{(\sigma^1)^2}{2(b^1)}\\
\bar\alpha_t^2&=e^{-(b^2t+2(\sigma^2)^2\tilde C^{22}(t,T+\Delta))}\Psi^2_0\\
\bar\beta^2_t&=e^{-(2b^2t+4(\sigma^2)^2\tilde C^{22}(t,T+\Delta))}\int_0^te^{2b^2s+4(\sigma^2)^2\tilde C^{22}(s,T+\Delta)}(\sigma^2)^2ds\\
\bar\alpha_t^3&=e^{-b^3t}\Psi^3_0\\
\bar\beta^3_t&=e^{-2b^3t}\frac{(\sigma^3)^2}{2b^3}(e^{2b^3t}-1),
\end{cases}
\eeq with \beq\label{12}
\begin{split}
\tilde C^{22}(t,T+\Delta)&=\frac{2(2\log(2b^2(e^{(T+\Delta-t)h^2}-1)+h^2(e^{(T+\Delta-t)h^2}+1))+t(2b^2+h^2))}{(2b^2+h^2)(2b^2-h^2)}\\
&-\frac{2(2\log(2b^2(e^{(T+\Delta)h^2}-1)+h^2(e^{(T+\Delta)h^2}+1))}{(2b^2+h^2)(2b^2-h^2)}
\end{split}
\eeq and $h^2=\sqrt{(2b^2)^2+8(\sigma^2)^2}$, and where we have
assumed deterministic initial values $\Psi^1_0,\Psi^2_0$ and $
\Psi^3_0.$  For details of the above computation see the proof of
Corollary 4.1.3. in \citet{Meneghello14}.

\section{Pricing of linear interest rate derivatives}\label{S.3}

We have discussed in subsection \ref{S.2.2} the pricing of OIS and
Libor bonds in the Gaussian, exponentially quadratic short rate
model introduced in subsection \ref{S.2.1}. In the remaining part of
the paper we shall be concerned with the pricing of interest rate
derivatives, namely with derivatives having the Libor rate as
underlying rate. In the present section we shall deal with the basic
linear derivatives, namely FRAs and interest rate swaps, while
nonlinear derivatives will then be dealt with in the following
section \ref{S.4}. For the FRA rates discussed in the next
subsection \ref{S.3.2} we shall in sub-subsection \ref{S.3.2.a}
exhibit an {\sl adjustment factor} allowing to pass from the
single-curve FRA rate to the multi-curve FRA rate.

\subsection{FRAs}\label{S.3.2}

We start by recalling the definition of a standard forward rate
agreement. We emphasize that we use a text-book definition which
differs slightly from a market definition, see \citet{Mercurio10a}.
\begin{df}\label{FRA}
Given the time points $0\leq t\le T<T+\Delta$, a {\sl forward rate
agreement} (FRA) is an OTC derivative that allows the holder to lock
in at the generic date $t\leq T$ the interest rate between the inception
date $T$ and the maturity $T+\Delta$ at a fixed value $R$. At
maturity $T+\Delta$ a payment based on the interest rate $R$,
applied to a notional amount of $N$, is made and the one based on the
relevant floating rate {\sl (generally the spot Libor rate
$L(T;T,T+\Delta)$)} is received.
\end{df}
Recalling that for the Libor rate we had postulated the relation
(\ref{6}) to hold at the inception time $T$, namely
$$L(T;T,T+\Delta)=\displaystyle\frac{1}{\Delta}\,\left(\frac{1}{\bar
p(T,T+\Delta)}-1\right),$$ the price, at $t\leq T,$ of the FRA with
fixed rate $R$ and notional $N$ can be computed under the
$(T+\Delta)-$ forward measure as \beq\label{PFRA}\begin{array}{l}
P^{FRA}(t;T,T+\Delta,R,N)\\ \\ \quad=N\Delta
p(t,T+\Delta)E^{T+\Delta}\left\{L(T;T,T+\Delta)-R\mid\F_t\right\}\\
\\ \quad =N p(t,T+\Delta)E^{T+\Delta}\left\{\frac{1}{\bar
p(T,T+\Delta)}-(1+\Delta R)\mid\F_t\right\},\end{array}\eeq
Defining \beq\label{nubar}
\bar\nu_{t,T}:=E^{T+\Delta}\left\{\frac{1}{\bar
p(T,T+\Delta)}\mid\F_t\right\},\eeq it is easily seen from
(\ref{PFRA}) that the {\sl fair rate of the FRA, namely the FRA
rate}, is given by \beq\label{FRARbar} \bar
R_t=\frac{1}{\Delta}\left(\bar\nu_{t,T}-1\right).\eeq
In the {\sl single-curve case} we have instead \beq\label{FRAR}
R_t=\frac{1}{\Delta}\left(\nu_{t,T}-1\right),\eeq where, given that
$\frac{p(\cdot,T)}{p(\cdot,T+\Delta)}$ is a
$Q^{T+\Delta}-$martingale (see Lemma \ref{mart}),
\beq\label{nu}\nu_{t,T}:=E^{T+\Delta}\left\{\frac{1}{p(T,T+\Delta)}\mid\F_t\right\}=\frac{p(t,T)}{p(t,T+\Delta)},\eeq
which is the classical expression for the FRA rate in the single-curve case. Notice that, contrary to (\ref{nubar}), the expression
in (\ref{nu}) can be explicitly computed on the basis of bond price
data without requiring an interest rate model.

\subsubsection{Adjustment factor}\label{S.3.2.a}

We shall show here the following
\begin{prop}\label{P.3.1} We have the relationship
\beq\label{adj}\bar\nu_{t,T}=\nu_{t,T}\cdot Ad_t^{T,\Delta}\cdot
Res_t^{T,\Delta}\eeq with
\beq\label{Adj1}\begin{array}{l}Ad_t^{T,\Delta}:=E^Q\left\{\frac{p(T,T+\Delta)}{\bar
p(T,T+\Delta)}\mid\F_t\right\}=E^Q\Bigl\{\exp\Bigl[\tilde
A(T,T+\Delta)\\ \hspace{1.5cm}+\kappa B^1(T,T+\Delta)\Psi_T^1+\bar
C^{33}(T,T+\Delta)(\Psi_T^3)^2\Bigr]\mid\F_t\Bigr\}\end{array}\eeq
and
\beq\label{Adj2}Res_t^{T,\Delta}=\exp\Bigl[-\kappa\frac{(\sg^1)^2}{2(b^1)^3}\left(1-e^{-b^1\Delta}\right)
\left(1-e^{-b^1(T-t)}\right)^2\Bigr],\eeq where $\tilde A(t,T)$ is
defined after (\ref{barp4}), $B^1(t,T)$ in (\ref{B^1}) and $\bar
C^{33}(t,T)$ in (\ref{eqdiff}).
\end{prop}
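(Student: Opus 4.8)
The plan is to start from the definition of $\bar\nu_{t,T}$ in (\ref{nubar}) and insert the factorization of $\bar p(T,T+\Delta)$ in terms of $p(T,T+\Delta)$ provided by the second line of (\ref{barp4}) in Proposition~\ref{p:bond-prices}. This gives
\[
\frac{1}{\bar p(T,T+\Delta)}=\frac{1}{p(T,T+\Delta)}\exp\Bigl[\tilde A(T,T+\Delta)+\kappa B^1(T,T+\Delta)\Psi_T^1+\bar C^{33}(T,T+\Delta)(\Psi_T^3)^2\Bigr],
\]
so that $\bar\nu_{t,T}=E^{T+\Delta}\bigl\{\frac{1}{p(T,T+\Delta)}\,Z_T\mid\F_t\bigr\}$ with $Z_T$ the exponential factor above. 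The first step is therefore to pass from the $(T+\Delta)$-forward measure back to $Q$ for the discounting part: writing $\frac{1}{p(T,T+\Delta)}=\frac{p(T,T)}{p(T,T+\Delta)}$ and using the change-of-numeraire density (\ref{density}) together with the tower property exactly as in the proof of Lemma~\ref{mart}, one separates the $\frac{1}{p(T,T+\Delta)}$ factor and is left with $\nu_{t,T}=\frac{p(t,T)}{p(t,T+\Delta)}$ times a conditional $Q$-expectation of $Z_T$. This is precisely the content of (\ref{Adj1}): $Ad_t^{T,\Delta}=E^Q\{Z_T\mid\F_t\}$.

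The second step is to evaluate $E^{T+\Delta}\{Z_T\mid\F_t\}$ versus $E^Q\{Z_T\mid\F_t\}$ — these differ because $Z_T$ depends on $\Psi_T^1$, whose drift changes under the measure switch (cf. Remark~\ref{mean-rev} and the $Q^{T+\Delta}$-dynamics (\ref{10})). The residual factor $Res_t^{T,\Delta}$ accounts exactly for this discrepancy. Concretely, $Z_T$ splits multiplicatively into a $\Psi^3$-part (whose law is unchanged between $Q$ and $Q^{T+\Delta}$, since $dw^{3,T+\Delta}=dw^3$ by (\ref{9})) and a $\Psi^1$-part $\exp[\kappa B^1(T,T+\Delta)\Psi_T^1]$. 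For the latter, I would compute both conditional expectations using the fact that, under each measure, $\Psi_T^1$ is conditionally Gaussian given $\F_t$ with a known mean and a variance that is the same under both measures (only the drift, hence the mean, shifts). Using the Gaussian moment generating function $E[e^{\lambda X}]=\exp[\lambda\mu+\tfrac12\lambda^2 v]$, the variance terms cancel in the ratio and one is left with $\exp[\kappa B^1(T,T+\Delta)(\mu^{T+\Delta}-\mu^Q)]$, where $\mu^{T+\Delta}-\mu^Q$ is the accumulated extra drift $-(\sigma^1)^2\int_t^T e^{-b^1(T-u)}B^1(u,T+\Delta)\,du$ coming from (\ref{10}). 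Plugging in $B^1(u,T+\Delta)=\frac{1}{b^1}(1-e^{-b^1(T+\Delta-u)})$ from (\ref{B^1}) and $B^1(T,T+\Delta)=\frac1{b^1}(1-e^{-b^1\Delta})$, a direct integration yields the closed form $\exp[-\kappa\frac{(\sigma^1)^2}{2(b^1)^3}(1-e^{-b^1\Delta})(1-e^{-b^1(T-t)})^2]$, i.e. (\ref{Adj2}). Combining: $\bar\nu_{t,T}=\nu_{t,T}\cdot E^Q\{Z_T\mid\F_t\}\cdot Res_t^{T,\Delta}$, which is (\ref{adj}).

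The main obstacle is the bookkeeping in the second step: one must be careful that the change of numeraire used to extract $\nu_{t,T}$ is applied consistently, so that the leftover expectation of $Z_T$ is genuinely a $Q$-expectation (as written in (\ref{Adj1})) and the measure-change correction on the $\Psi^1$-exponential is not double-counted. The cleanest way is to do the two operations in sequence — first the Bayes/tower manipulation of Lemma~\ref{mart} to pull out $\frac{p(t,T)}{p(t,T+\Delta)}$, which actually leaves a $Q$-expectation of $Z_T$ directly, and then to observe that $Ad_t^{T,\Delta}$ is already defined as that $Q$-expectation, so that $Res_t^{T,\Delta}$ is \emph{not} the $Q$-vs-$Q^{T+\Delta}$ discrepancy of $Z_T$ but rather arises more subtly. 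A careful rereading suggests the intended route is instead: compute $\bar\nu_{t,T}$ directly as a $Q^{T+\Delta}$-expectation of the product $\frac{1}{p(T,T+\Delta)}Z_T$ where $\frac{1}{p(T,T+\Delta)}$ is itself exponential-quadratic in $\Psi_T$ via (\ref{pt}), so the whole integrand is a single exponential-quadratic functional of the Gaussian vector $\Psi_T$ under $Q^{T+\Delta}$; evaluate that Gaussian integral in closed form using (\ref{11})–(\ref{12}); and finally reorganize the resulting exponential so as to recognize $\nu_{t,T}=\frac{p(t,T)}{p(t,T+\Delta)}$ as one factor and $E^Q\{Z_T\mid\F_t\}$ as another, with $Res_t^{T,\Delta}$ the leftover cross-term between the $\Psi^1$-linear pieces of $\frac{1}{p(T,T+\Delta)}$ and of $Z_T$. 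Either way, the only real work is one Gaussian integral plus the explicit integration giving (\ref{Adj2}); the conceptual content is just "exponential-quadratic of a Gaussian is computable in closed form," already invoked for Proposition~\ref{p:bond-prices}.
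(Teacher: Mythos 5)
Your starting point (inserting the factorization from (\ref{barp4}) into (\ref{nubar})) is the same as the paper's, but the first route you describe asserts exactly the factorization whose failure the residual factor is there to measure. The Bayes/tower manipulation of Lemma \ref{mart} applied to $E^{T+\Delta}\{Z_T/p(T,T+\Delta)\mid\F_t\}$ does \emph{not} ``separate the $1/p(T,T+\Delta)$ factor'': it yields $\frac{1}{p(t,T+\Delta)}E^Q\{e^{-\int_t^T r_u du}Z_T\mid\F_t\}$, and the discount factor $e^{-\int_t^T r_u du}=e^{-\int_t^T(\Psi^1_u+(\Psi^2_u)^2)du}$ is correlated with $Z_T$ through $\Psi^1$. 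If they separated, you would get $\bar\nu_{t,T}=\nu_{t,T}\cdot Ad_t^{T,\Delta}$ with $Res_t^{T,\Delta}\equiv 1$. The residual is precisely the covariance correction $\exp\bigl[\mathrm{Cov}\bigl(-\int_t^T\Psi^1_u du,\,\kappa B^1(T,T+\Delta)\Psi^1_T\mid\F_t\bigr)\bigr]$, and this is what the paper computes: it evaluates the \emph{joint} expectation $E^Q\{e^{-\int_t^T\Psi^1_u du+\kappa B^1(T,T+\Delta)\Psi^1_T}\mid\F_t\}$ via the affine formula (\ref{affine}) with $\delta=1$ and $K=-\kappa B^1(T,T+\Delta)$, and the resulting cross term $-\kappa B^1(T,T+\Delta)(\sigma^1)^2\int_t^T B^1(u,T)e^{-b^1(T-u)}du$ is exactly $\log Res_t^{T,\Delta}$.

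Your second step is also quantitatively wrong as written: you identify $Res_t^{T,\Delta}$ with the $Q$-versus-$Q^{T+\Delta}$ mean shift of $\Psi^1_T$, which gives $\exp\bigl[-\kappa B^1(T,T+\Delta)(\sigma^1)^2\int_t^T e^{-b^1(T-u)}B^1(u,T+\Delta)du\bigr]$; the integrand needed for (\ref{Adj2}) is $B^1(u,T)$, not $B^1(u,T+\Delta)$, and the two integrals do not agree. (A measure-change version of your argument can be made to work, but with the $T$-forward measure: pulling $\nu_{t,T}$ out of the $Q^{T+\Delta}$-expectation via the martingale $p(\cdot,T)/p(\cdot,T+\Delta)$ amounts to passing to $Q^{T}$, under which the drift correction of $\Psi^1$ involves $B^1(u,T)$ and the induced mean shift of $\Psi^1_T$ does reproduce (\ref{Adj2}).) Your fallback ``route 2'' --- integrating the full exponential-quadratic against the Gaussian law (\ref{11})--(\ref{12}) under $Q^{T+\Delta}$ and reorganizing --- is feasible, and your description of $Res_t^{T,\Delta}$ as a cross-term between the $\Psi^1$-linear pieces is the right intuition, but you leave that computation entirely unexecuted; as it stands the proposal does not establish (\ref{Adj2}).
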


\begin{proof}
Firstly, from (\ref{barp4}) we obtain
\begin{equation}\label{p/barp2}
\frac{p(T,T+\Delta)}{\bar p(T,T+\Delta)}=e^{\tilde
A(T,T+\Delta)+\kappa B^1(T,T+\Delta)\Psi^1_{T}+\bar
C^{33}(T,T+\Delta)(\Psi^3_{T})^2}.
\end{equation}
In (\ref{nubar}) we now change back from the $(T+\Delta)-$ forward measure to the standard
martingale measure using the density process ${\cal L}_t$
given in (\ref{density}).  Using furthermore the above expression for the ratio
of the OIS and the Libor bond prices and 
taking into account the definition of the short rate $r_t$ in
terms of the factor processes, we obtain \beq\label{P1}
\begin{split}
\bar{\nu}_{t,T}&=E^{T+\Delta}\biggl\{\frac{1}{\bar{p}(T,T+\Delta)}\big|\mathcal{F}_t\biggr\}={\cal
L}_t^{-1}E^Q\biggl\{\frac{{\cal L}_{T}}
{\bar{p}(T,T+\Delta)}\big|\mathcal{F}_t\biggr\}\\
&=\frac{1}{p(t,T+\Delta)}E^Q\biggl\{\mathrm{exp}\Bigl(-\int_t^{T}r_u du\Bigr)\frac{p(T,T+\Delta)}{\bar{p}(T,T+\Delta)}\big|\mathcal{F}_t\biggr\}\\
&=\frac{1}{p(t,T+\Delta)}\mathrm{exp}[\tilde A(T,T+\Delta)]E^Q\Bigl\{e^{\bar C^{33}(T,T+\Delta)(\Psi^3_{T})^2}\big|\mathcal{F}_t\Bigr\}\\
&\hspace{2cm}\cdot
E^Q\Bigl\{e^{-\int_t^{T}(\Psi^1_u+(\Psi^2_u)^2)du}e^{\kappa
B^1(T,T+\Delta)\Psi^1_T}\big|\mathcal{F}_t\Bigr\}\\
&=\frac{1}{p(t,T+\Delta)}\mathrm{exp}[\tilde
A(T,T+\Delta)]E^Q\Bigl\{e^{\bar
C^{33}(T,T+\Delta)(\Psi^3_{T})^2}\big|\mathcal{F}_t\Bigr\}\\
&\hspace{.8cm}\cdot E^Q\Bigl\{e^{-\int_t^{T}\Psi^1_udu}e^{\kappa
B^1(T,T+\Delta)\Psi^1_{T}}\big|\mathcal{F}_t\Bigr\}E^Q\Bigl\{e^{-\int_t^{T}(\Psi^2_u)^2du}
\big|\mathcal{F}_t\Bigr\},
\end{split}
\eeq where we have used the independence  of the factors
$\Psi^i,\>i=1,2,3$ under $Q$.

Recall now from the theory of affine processes (see e.g. Lemma 2.1
in \citet{GrbacRunggaldier15}) that, for a process $\Psi_t^1$ satisfying (\ref{7}), we
have for all $\delta, K \in \R$
 \beq\label{affine}E^Q\left\{\exp\left[-\int_t^T\delta
\Psi_u^1du-K\Psi_T^1\right]\mid\F_t\right\}=\exp[\alpha^1(t,T)-\beta^1(t,T)\Psi_t^1],\eeq
where
$$\left\{\begin{array}{lcl}
\beta^1(t,T)&=&Ke^{-b^1(T-t)}-\frac{\delta}{b^1}\left(e^{-b^1(T-t)}-1\right)\\
\alpha^1(t,T)&=&\frac{(\sg^1)^2}{2}\,\int_t^T(\beta^1(u,T))^2du.\end{array}\right.$$
Setting $K=-\kappa \,B^1(T,T+\Delta)$ and $\delta=1$, and recalling
from (\ref{B^1}) that
$B^1(t,T)=\frac{1}{b^1}\Bigl( 1 - e^{-b^1(T-t)}\Bigr)$, this leads to
\beq\label{T1}\begin{array}{l}
E^Q\Bigl\{e^{-\int_t^{T}\Psi^1_udu}e^{\kappa
B^1(T,T+\Delta)\Psi^1_{T}}\big|\mathcal{F}_t\Bigr\}\\
=\mathrm{exp}\Biggl[\frac{(\sigma^1)^2}{2}(\kappa
B^1(T,T+\Delta))^2\displaystyle\int_t^{T}e^{-2b^1(T-u)}du \\
\qquad -\kappa
B^1(T,T+\Delta)(\sigma^1)^2\displaystyle\int_t^{T} B^1(u,T)e^{-b^1(T-u)}du
+ \frac{(\sigma^1)^2}{2}\int_t^{T}(B^1(u,T))^2du
\\
\qquad+\left(\kappa
B^1(T,T+\Delta)e^{-b^1(T-t)}-B^1(t,T)\right)\Psi^1_t\Biggr].\end{array}\eeq
On the other hand, from the results of section \ref{S.2.2} we also
have that, for a process $\Psi_t^2$ satisfying (\ref{7}),
$$E^Q\left\{\exp\left[-\displaystyle\int_t^T(\Psi_u^2)^2du\right]\mid\F_t\right\}=\exp\left[-\alpha^2(t,T)-C^{22}(t,T)(\Psi_t^2)^2\right],$$
where $C^{22}(t,T)$ corresponds to (\ref{eqdiff}) and (see
(\ref{A}))
$$\alpha^2(t,T)=(\sg^2)^2\int_t^TC^{22}(u,T)du.
$$
This implies that \beq\label{T2}\begin{array}{l}
E^Q\left\{\exp\left[-\displaystyle\int_t^T(\Psi_u^2)^2du\right]\mid\F_t\right\}\\
\>\>=\exp\left[-(\sg^2)^2\displaystyle\int_t^TC^{22}(u,T)du 
-C^{22}(t,T)\left(\Psi_t^2\right)^2\right].\end{array}\eeq Replacing
(\ref{T1}) and (\ref{T2}) into (\ref{P1}), and recalling the
expression for $p(t,T)$ in (\ref{pt}) with $A(\cdot), B^1(\cdot),
C^{22}(\cdot)$ according to (\ref{A}), (\ref{B^1}) and
(\ref{eqdiff}) respectively, we obtain \beq\label{P9}
\begin{array}{l}
\bar{\nu}_{t,T}=\frac{p(t,T)}{p(t,T+\Delta)}e^{\tilde
A(T,T+\Delta)}E^Q\Bigl[e^{\bar C^{33}(T,T+\Delta)
(\Psi^3_{T})^2}\big|\mathcal{F}_t\Bigr]\\
\quad\cdot \mathrm{exp}\Bigl[\frac{(\sigma^1)^2}{2}(\kappa
B^1(T,T+\Delta))^2\displaystyle\int_t^{T}e^{-2b^1(T-u)}du+
\kappa B^1(T,T+\Delta)e^{-b^1(T-t)}\Psi^1_t\Bigr]\\
\quad\cdot\mathrm{exp}\Bigl[-\kappa
B^1(T,T+\Delta)(\sigma^1)^2\displaystyle\int_t^{T}B^1(u,T)e^{-b^1(T-u)}du\Bigr].
\end{array}
\eeq We recall  the expression (\ref{p/barp2}) for
$\frac{p(T,T+\Delta)}{\bar p(T,T+\Delta)}$ and the fact that,
according to (\ref{affine}), we have
$$\begin{array}{l}
E^Q\Bigl\{e^{\kappa
B^1(T,T+\Delta)\Psi^1_{T}}\big|\mathcal{F}_t\Bigr\}\\
\quad=\exp\left[\frac{(\sigma^1)^2}{2}(\kappa
B^1(T,T+\Delta))^2\displaystyle\int_t^{T}e^{-2b^1(T-u)}du+\kappa
B^1(T,T+\Delta)e^{-b^1(T-t)}\Psi_t^1\right].\end{array}$$
Inserting these expressions  into (\ref{P9}) we obtain  the result, namely \beq\label{P10}
\begin{array}{lcl}
\bar{\nu}_{t,T}&=&
\frac{p(t,T)}{p(t,T+\Delta)}E^Q\Bigl\{\frac{p(T,T+\Delta)}{\bar
p(T,T+\Delta)}\big|\mathcal{F}_t\Bigr\} \\
&{}&  \cdot \mathrm{exp}\Bigl[-\kappa B^1(T,T+\Delta)(\sigma^1)^2\displaystyle\int_t^{T}B^1(u,T)e^{-b^1(T-u)}du\Bigr]\\
&=&\frac{p(t,T)}{p(t,T+\Delta)}E^Q\Bigl\{\frac{p(T,T+\Delta)}{\bar p(T,T+\Delta)}\big|\mathcal{F}_t\Bigr\}\\
&{}&  \cdot \mathrm{exp}\Bigl[-\frac{\kappa}{b^1}(e^{-b^1\Delta}-1)(\sigma^1)^2\Bigl(\frac{1}{2(b^1)^2}
(1-e^{-2b^1(T-t)})-\frac{1}{(b^1)^2}(1-e^{-b^1(T-t)})\Bigr)\Bigr],
\end{array}
\eeq where we have also used the fact that
$$\begin{array}{l}
\int_t^T B^1(u,T)e^{-b^1(T-u)}du=\displaystyle\int_t^T
\frac{1}{b^1}\left(1 - e^{-b^1(T-u)}\right)e^{-b^1(T-u)}du\\
\hspace{3cm}=- \frac{1}{2(b^1)^2}\left(1-e^{-2b^1(T-t)}\right)+\frac{1}{(b^1)^2}\left(1-e^{-b^1(T-t)}\right). \\   \hspace{11cm}   \qed \end{array} $$
\end{proof}

\begin{rem}\label{R.3.1} The adjustment factor $Ad_t^{T,\Delta}$
allows for some intuitive interpretations. Here we mention only the
easiest one for the case when $\kappa=0$ (independence of $r_t$ and
$s_t$). In this case we have $r_t+s_t>r_t$ implying that $\bar
p(T,T+\Delta)<p(T,T+\Delta)$ so that $Ad_t^{T,\Delta}\ge 1$.
Furthermore, always for $\kappa=0$, the residual factor has value
$Res_t^{T,\Delta}=1$. All this in turn implies $\bar\nu_{t,T}\ge
\nu_{t,T}$ and with it $\bar R_t\ge R_t$, which is what one would
expect to be the case.\end{rem}

\begin{rem}\label{R.3.2} ({\sl Calibration to the initial term structure}). The parameters in the
model (\ref{7}) for the factors $\Psi_t^i$ and thus also in the
model (\ref{8}) for the short rate $r_t$ and the spread $s_t$ are
the coefficients $b^i$ and $\sg^i$ for $i=1,2,3.$ From (\ref{p})
notice that, for $i=1,2$, these coefficients enter the expressions
for the OIS bond prices $p(t,T)$ that can be assumed to be
observable since they can be bootstrapped from the market quotes for the OIS swap rates. 
We may thus assume
that these coefficients, i.e. $b^i$ and $\sg^i$ for $i=1,2$, can be
calibrated as in the pre-crisis single-curve short rate models.
It remains to
calibrate $b^3$, $\sg^3$ and, possibly the correlation coefficient $\kappa$. Via (\ref{barp})
they affect the prices of the fictitious Libor bonds $\bar p(t,T)$
that are, however, not observable. One may observe though the FRA
rates $R_t$ and $\bar R_t$ and thus also $\nu_{t,T}$, as well as
$\bar\nu_{t,T}$. Via (\ref{adj}) this would then allow one to
calibrate also the remaining parameters. This task would turn out to
be even simpler if one would have access to the value of $\kappa$ by
other means.

We emphasize that in order to ensure a good fit to the initial bond
term structure, a deterministic shift extension of the model or
time-dependent coefficients  $b^i$ could be considered. We recall
also that we have assumed the mean-reversion level equal to zero for
simplicity;  in practice it would be one more coefficient to be
calibrated for each factor $\Psi_t^i$.
\end{rem}

\subsection{Interest rate swaps}\label{S.3.5}

We first recall the notion of a {\sl (payer) interest rate swap}.
Given a collection of dates $0 \leq T_0 < T_1 < \cdots < T_n$ with
$\gm\equiv\gm_{k} := T_{k} - T_{k-1}\>(k=1,\cdots,n)$, as well as a
notional amount $N$, a payer swap is a financial contract, where a
stream of interest payments on the notional $N$ is made at a fixed
rate $R$ in exchange for receiving an analogous stream corresponding
to the Libor rate. Among the various possible conventions concerning
the fixing for the Libor and the payment dates, we choose here the
one where, for each interval $[T_{k-1},T_k]$, the Libor rates are
fixed in advance and the payments are made in arrears. The swap is
thus initiated at $T_0$ and the first payment is made at $T_1$. A
{\sl receiver swap} is completely symmetric with the interest at the
fixed rate being received; here we concentrate on payer swaps.

The arbitrage-free price of the swap, evaluated at $t\leq T_0$, is
given by the following expression where, analogously to 
$E^{T+\Delta}\{\cdot\},$ we denote by $E^{T_k}\{\cdot\}$ the
expectation with respect to the forward measure
$Q^{T_k}\>(k=1,\cdots,n)$
\begin{eqnarray}\label{swa}
\nonumber P^{Sw}(t; T_0, T_n, R) & = & \gm\sum_{k=1}^{n} p(t, T_k)
E^{T_k}
\left\{L(T_{k-1}; T_{k-1}, T_k) - R | \mathcal{F}_t \right\} \\
& = &  \gm\sum_{k=1}^{n} p(t, T_k) \left( L(t; T_{k-1}, T_k) - R
\right).
\end{eqnarray}
For easier notation we have assumed the notional to be $1$, i.e.
$N=1$.

We shall next obtain an explicit expression for $P^{Sw}(t;T_0, T_n,
R)$ starting from the first equality in (\ref{swa}). To
this effect, recalling from (\ref{eqdiff}) that $C^{22}(t,T)=\bar
C^{22}(t,T)$, introduce again some shorthand notation, namely
\beq\label{shorth}
\begin{array}{l}
A_k:=\bar A(T_{k-1},T_{k}), B^1_k:=B^1(T_{k-1},T_{k}),\\
C^{22}_k:=C^{22}(T_{k-1},T_{k})=\bar C^{22}(T_{k-1},T_{k}), \>\bar
C^{33}_k:=\bar C^{33}(T_{k-1},T_{k}).\end{array}\eeq

The crucial quantity to be computed in (\ref{swa}) is the following
one
\begin{equation}\label{manip5}
\begin{split}
&E^{T_k}\{\gamma L(T_{k-1};T_{k-1},T_{k})|\mathcal{F}_{t}\}=E^{T_k}\Bigl\{\frac{1}{\bar p(T_{k-1},T_{k})}|\mathcal{F}_{t}\Bigr\}-1\\
&=e^{A_k}E^{T_k}\{\text{exp}((\kappa+1) B^1_k\Psi^1_{T_{k-1}}+C^{22}_k(\Psi^2_{T_{k-1}})^2+\bar C^{33}_k(\Psi^3_{T_{k-1}})^2)|\mathcal{F}_{t}\}-1,\\
\end{split}
\end{equation}
where we have used the first relation on the right in (\ref{barp4}).
The expectations in (\ref{manip5}) have to be computed under the
measures $Q^{T_k}$, under which, by analogy to (\ref{10}), the
factors have  the dynamics \beq\label{101}
\begin{array}{lcl} d\Psi_t^1&=&-\left[b^1
\Psi_t^1+(\sg^1)^2B^1(t,T_k)\right]\,dt+\sg^1dw_t^{1,k}\\
d\Psi_t^2&=&-\left[b^2
\Psi_t^2+2(\sg^2)^2C^{22}(t,T_k)\Psi^2_t\right]\,dt+\sg^2dw_t^{2,k}\\
d\Psi_t^3&=&-b^3 \Psi_t^3dt+\sg^3dw_t^{3,k}.\end{array}\eeq where
$w^{i,k}$, $i=1, 2, 3$, are independent Wiener processes with
respect to $Q^{T_k}$. A straightforward generalization of
(\ref{affine}) to the case where the factor process $\Psi_t^1$
satisfies the foll\-ow\-ing affine Hull-White model
$$d\Psi_t^1=(a^1(t)-b^1\Psi_t^1)dt+\sg^1dw_t$$
can be obtained as follows
\beq\label{affine1}E^Q\left\{\exp\left[-\int_t^T\delta
\Psi_u^1du-K\Psi_T^1\right]\mid\F_t\right\}=\exp[\al^1(t,T)-\beta^1(t,T)\Psi_t^1],\eeq
with \beq\label{HW}\left\{\begin{array}{lcl}
\beta^1(t,T)&=&Ke^{-b^1(T-t)}-\frac{\delta}{b^1}\left(e^{-b^1(T-t)}-1\right)\\
\al^1(t,T)&=&\frac{(\sg^1)^2}{2}\,\displaystyle\int_t^T(\beta^1(u,T))^2du-
\displaystyle\int_t^Ta^1(u)\beta^1(u,T)du.\end{array}\right.\eeq We
apply this result to our situation where under $Q^{T_k}$  the
process $\Psi_t^1$ satisfies the first SDE in (\ref{101}) and thus
corresponds to the above dynamics with
$a^1(t)=-(\sg^1)^2B^1(t,T_k).$ Furthermore, setting 
$K=-(\kappa+1) \,B^1_k$ and $\delta=0$, we obtain for the first
expectation in the second line of (\ref{manip5}) \beq\label{T11}
E^{T_k}\{\text{exp}((\kappa+1)
B^1_k\Psi^1_{T_{k-1}}|\mathcal{F}_{t}\}=\exp[\Gamma^1(t,T_k)-\rho^1(t,T_k)\,\Psi^1_{t}],\eeq
with \beq\label{T12} \left\{\begin{array}{lcl}
\rho^1(t,T_k)&=&-(\kappa+1)B_k^1e^{-b^1(T_k-t)} \\ 
\Gamma^1(t,T_k)&=&\frac{(\sg^1)^2}{2}\displaystyle\int_{t}^{T_k}\left(\rho^1(u,T_k)\right)^2du
+(\sg^1)^2\displaystyle\int_{t}^{T_k}B^1(u,T_k)\rho^1(u,T_k)du.
\end{array}\right.\eeq
For the remaining two expectations  in the second line of (\ref{manip5}) we
shall use the following

\begin{lem}\label{Laffine}
Let a generic process $\Psi_t$ satisfy the dynamics \beq\label{L1}
d\Psi_t=b(t)\Psi_tdt+\sg\,dw_t\eeq with $w_t$ a Wiener process.
Then, for all $C \in \R$ such that
$E^Q\left\{\exp\left[C\,(\Psi_T)^2\right]\right\} < \infty$, we have
\beq\label{L2}
E^Q\left\{\exp\left[C\,(\Psi_T)^2\right]\mid\F_t\right\}=\exp\left[\Gamma(t,T)-\rho(t,T)\,(\Psi_t)^2\right]\eeq
with $\rho(t,T)$ and $\Gamma(t,T)$ satisfying \beq\label{L3}
\left\{\begin{array}{l} \rho_t(t,T)+ 2b(t) \rho(t,T)-
2(\sg)^2\,(\rho(t,T))^2=0\>;\quad
\rho(T,T)=-C\\
\Gamma_t(t,T)=(\sg)^2 \rho(t,T).\end{array}\right.\eeq
\end{lem}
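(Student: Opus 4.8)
The plan is to proceed exactly as in the derivation of the bond-price formulas of Subsection~\ref{S.2.2}: exploit the martingale property of a conditional expectation together with an exponential-quadratic ansatz, i.e.\ a Feynman--Kac argument. Set $M_t:=E^Q\bigl\{\exp[C\,(\Psi_T)^2]\mid\F_t\bigr\}$; this is a (local) martingale by construction, and since $\Psi_T$ is, conditionally on $\F_t$, Gaussian, it is natural to look for $M_t$ in the form $M_t=\exp[\Gamma(t,T)-\rho(t,T)(\Psi_t)^2]$ with deterministic $\Gamma(\cdot,T),\rho(\cdot,T)$. Applying It\^o's formula to $f(t,x)=\exp[\Gamma(t,T)-\rho(t,T)x^2]$ along the dynamics (\ref{L1}), the drift of $M_t$ equals
\[
f(t,\Psi_t)\Bigl[\bigl(\Gamma_t(t,T)-\sigma^2\rho(t,T)\bigr)+(\Psi_t)^2\bigl(-\rho_t(t,T)-2b(t)\rho(t,T)+2\sigma^2(\rho(t,T))^2\bigr)\Bigr].
\]
Because $\Psi_t$ has full support, requiring this drift to vanish identically in $\Psi_t$ forces precisely the two identities in (\ref{L3}), namely the scalar Riccati equation $\rho_t+2b(t)\rho-2\sigma^2\rho^2=0$ and $\Gamma_t=\sigma^2\rho$; the terminal conditions $\rho(T,T)=-C$ and $\Gamma(T,T)=0$ are read off from $M_T=\exp[C\,(\Psi_T)^2]$.

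It remains to turn this formal computation into a genuine proof, i.e.\ to verify that $M_t$ really does have the exponential-quadratic form (equivalently, that the candidate process is a true martingale, not merely a local one). The cleanest route is to solve the scalar Riccati equation explicitly: writing $\Phi(u,T):=\exp[\int_u^T b(s)\,ds]$ and $v(t,T):=\sigma^2\int_t^T\Phi(u,T)^2\,du$, one has $v(t,T)=Var^Q(\Psi_T\mid\F_t)$ and one checks by direct differentiation, using $\Phi_t(t,T)=-b(t)\Phi(t,T)$ and $v_t(t,T)=-\sigma^2\Phi(t,T)^2$, that
\[
\rho(t,T)=-\frac{C\,\Phi(t,T)^2}{1-2C\,v(t,T)},\qquad \Gamma(t,T)=-\tfrac12\log\bigl(1-2C\,v(t,T)\bigr)
\]
solve (\ref{L3}) with the required terminal values. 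One then computes $E^Q\{\exp[C(\Psi_T)^2]\mid\F_t\}$ as a one-dimensional Gaussian integral over the law of $\Psi_T$ given $\F_t$ (mean $\Phi(t,T)\Psi_t$, variance $v(t,T)$), which reproduces exactly $\exp[\Gamma(t,T)-\rho(t,T)(\Psi_t)^2]$; this also sidesteps any local-versus-true-martingale subtlety.

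The only genuinely delicate point, and the one I expect to be the main obstacle, is the interplay with the integrability hypothesis $E^Q\{\exp[C(\Psi_T)^2]\}<\infty$. For $C\le0$ nothing has to be checked, while for $C>0$ the Gaussian integral above converges if and only if $2C\,v(0,T)<1$; one must then observe that $v(t,T)\le v(0,T)$ for every $t\le T$, since the integrand $\Phi(\cdot,T)^2$ is nonnegative, so that the time-$0$ hypothesis propagates to every conditioning time $t$ and, equivalently, guarantees that the Riccati solution $\rho(\cdot,T)$ stays finite on $[0,T]$ (its denominator $1-2C\,v(\cdot,T)$ never vanishes). Once this monotonicity of the conditional variance is in place, the verification of (\ref{L3}) and of the Gaussian integral is routine, and the lemma follows.
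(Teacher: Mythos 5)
Your proof is correct, but it follows a genuinely different route from the paper's. The paper applies It\^o's formula to the squared process $\Phi_t:=(\Psi_t)^2$, observes that it satisfies a CIR-type equation $d\Phi_t=\bigl((\sg)^2+2b(t)\Phi_t\bigr)dt+2\sg\sqrt{\Phi_t}\,dw_t$, and then simply invokes the known exponential-affine transform formula for CIR/affine processes (citing \citet{LambertonLapeyre07} and Lemma 2.2 of \citet{GrbacRunggaldier15}) to read off \eqref{L2}--\eqref{L3}. You instead keep the Gaussian process $\Psi$ itself, posit the exponential-quadratic ansatz, derive the Riccati system from the vanishing of the drift of the conditional-expectation martingale, and then close the argument by solving the Riccati equation in closed form and checking it against the explicit one-dimensional Gaussian integral for $\Psi_T$ given $\F_t$ (mean $\Phi(t,T)\Psi_t$, variance $v(t,T)$); your formulas $\rho=-C\Phi^2/(1-2Cv)$ and $\Gamma=-\tfrac12\log(1-2Cv)$ do verify \eqref{L3} with the right terminal values. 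What your approach buys is self-containedness and extra information: you obtain explicit solutions for general time-dependent $b(t)$ (the paper only does this in Corollary \ref{Caffine} for constant $b$), and you make precise the role of the hypothesis $E^Q\{\exp[C(\Psi_T)^2]\}<\infty$ --- namely $2Cv(0,T)<1$, which by monotonicity of $t\mapsto v(t,T)$ guarantees the Riccati denominator never vanishes on $[0,T]$ and that the candidate is a true martingale --- a point the paper leaves implicit inside the cited lemma. What the paper's route buys is brevity and the placement of the result within the standard affine term-structure framework used elsewhere in the text.
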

\begin{proof} An application of It\^o's formula yields that the nonnegative process $\Phi_t:=(\Psi_t)^2$ satisfies the following SDE
$$ d\Phi_t=\left((\sg)^2 + 2b(t)\,\Phi_t\right)dt+ 2\sg
\,\sqrt{\Phi_t}\,dw_t. $$ We recall that a process $\Phi_t$ given in
general form by $$
d\Phi_t=(a+\ld(t)\Phi_t)dt+\eta\,\sqrt{\Phi_t}\,dw_t,$$ with $a, \eta
>0$ and $\ld(t)$ a deterministic function, is a CIR process. Thus,
$(\Psi_t)^2$ is equivalent in distribution to a CIR process with
coefficients given by $$ \ld(t)=2b(t)\quad,\quad
\eta=2\sg\quad,\quad a=(\sg)^2. $$ From the theory of affine term
structure models (see e.g. \citet{LambertonLapeyre07}, or Lemma 2.2
in \citet{GrbacRunggaldier15}) it now follows that

\begin{align*}
E^Q\left\{\exp\left[C\,(\Psi_T)^2\right]\mid\F_t\right\} & =E^Q\left\{\exp\left[C\,\Phi_T\right]\mid\F_t\right\} = \exp\left[\Gamma(t,T)-\rho(t,T)\, \Phi_t\right] \\
& =\exp\left[\Gamma(t,T)-\rho(t,T)\,(\Psi_t)^2\right]
\end{align*}
 with
$\rho(t,T)$ and $\Gamma(t,T)$ satisfying (\ref{L3}). \qed
\end{proof}
\begin{cor}\label{Caffine}
When $b(t)$ is constant with respect to time, i.e. $b(t)\equiv b$,
so that also $\ld(t)\equiv \ld$, then the equations for $\rho(t,T)$
and $\Gamma(t,T)$ in (\ref{L3}) admit an explicit solution given by
\beq\label{L6} \left\{\begin{array}{l} \rho(t,T)=
\frac{4bhe^{2b(T-t)}}{4(\sigma)^2he^{2b(T-t)}-1}\quad\mbox{with}\quad
h:=\frac{C}{4( \sg)^2C + 4b}\\
\Gamma(t,T)=-(\sg)^2\displaystyle\int_t^T\rho(u,T)du.\end{array}\right.\eeq
\end{cor}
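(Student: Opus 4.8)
The plan is to integrate the system (\ref{L3}) directly, exploiting the fact that for constant $b(t)\equiv b$ the Riccati equation for $\rho$ has no zeroth-order term and is therefore a Bernoulli equation. First I would rewrite its first line as
\[
\rho_t(t,T)=2(\sg)^2\,\rho(t,T)^2-2b\,\rho(t,T),\qquad \rho(T,T)=-C,
\]
and substitute $v(t,T):=1/\rho(t,T)$. Since $v_t=-\rho_t/\rho^2$, this linearizes to
\[
v_t(t,T)-2b\,v(t,T)=-2(\sg)^2,\qquad v(T,T)=-1/C .
\]
The general solution of this linear first-order ODE is the constant particular solution $(\sg)^2/b$ plus a homogeneous term proportional to $e^{2bt}$; fixing the constant with the terminal condition at $t=T$ gives
\[
v(t,T)=\frac{(\sg)^2}{b}-\Bigl(\frac{1}{C}+\frac{(\sg)^2}{b}\Bigr)e^{-2b(T-t)} .
\]

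Next I would invert, $\rho(t,T)=1/v(t,T)$, to obtain
\[
\rho(t,T)=\frac{b}{(\sg)^2-\bigl((\sg)^2+b/C\bigr)e^{-2b(T-t)}},
\]
and rearrange it into the form stated in (\ref{L6}): multiplying numerator and denominator first by $e^{2b(T-t)}$ and then by the constant $C/((\sg)^2C+b)$, which equals $4h$, turns the right-hand side into $4bh\,e^{2b(T-t)}/(4(\sg)^2h\,e^{2b(T-t)}-1)$ with $h=C/(4(\sg)^2C+4b)$, as claimed; alternatively one can simply substitute the proposed $\rho$ back into the Bernoulli equation and verify it together with $\rho(T,T)=-C$. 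Finally, for the second line of (\ref{L3}), $\Gamma_t(t,T)=(\sg)^2\rho(t,T)$, note that evaluating (\ref{L2}) at $t=T$ forces $\Gamma(T,T)=0$; integrating from $t$ to $T$ then yields $\Gamma(t,T)=-(\sg)^2\int_t^T\rho(u,T)\,du$, which is exactly the expression in (\ref{L6}) and can, if desired, be evaluated in closed form since $\rho(u,T)$ is an elementary function of $u$.

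There is no genuinely hard step here: the whole argument is the textbook solution of a constant-coefficient Bernoulli equation followed by one quadrature. The only points requiring a little care are the bookkeeping of the terminal conditions so that the integration constants are fixed correctly, the elementary algebra identifying the two equivalent forms of $\rho$, and the tacit hypothesis $4(\sg)^2C+4b\neq0$; the degenerate case $C=-b/(\sg)^2$ corresponds to the constant solution $\rho\equiv b/(\sg)^2$, recovered as the limit $h\to\infty$ in (\ref{L6}), while the finiteness assumption $E^Q\{\exp[C(\Psi_T)^2]\}<\infty$ in Lemma \ref{Laffine} is what guarantees that the solution $\rho$ does not explode on $[t,T]$.
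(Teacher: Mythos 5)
Your proposal is correct. The reduction of the Riccati equation $\rho_t=2(\sg)^2\rho^2-2b\rho$ to the linear equation $v_t-2bv=-2(\sg)^2$ via $v=1/\rho$ is sound, the terminal condition $v(T,T)=-1/C$ is handled correctly, and your algebraic identification $4h=C/((\sg)^2C+b)$ does turn $\rho=b/\bigl((\sg)^2-((\sg)^2+b/C)e^{-2b(T-t)}\bigr)$ into the stated form $4bhe^{2b(T-t)}/(4(\sg)^2he^{2b(T-t)}-1)$; one also checks directly that this expression satisfies the ODE and that $\rho(T,T)=-C$. The derivation of $\Gamma$ by quadrature from $\Gamma(T,T)=0$ is likewise right. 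The paper itself states the corollary without proof, treating it as the standard explicit solution of a constant-coefficient Riccati equation inherited from affine (CIR-type) term-structure theory via the references cited in Lemma \ref{Laffine}; your Bernoulli-substitution argument is the natural self-contained way to verify it, and your remarks on the degenerate case $(\sg)^2C+b=0$ and on non-explosion of $\rho$ on $[t,T]$ (tied to the finiteness hypothesis of Lemma \ref{Laffine}) are appropriate caveats that the paper leaves implicit.
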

Coming now to the second expectation in the second line of 
(\ref{manip5}) and using the second equation in (\ref{101}), we set
$$b(t):=-\left[b^2+2(\sg^2)^2C^{22}(t,T_k)\right],
\>\sg:=\sg^2,\>C=C_k^{22}$$ and apply Lemma \ref{Laffine}, provided
that the parameters  $b^2$ and $\sigma^2$ of the process $\Psi^2$
are such that $C=C_k^{22}$ satisfies the assumption from the lemma.
 We thus obtain
    \beq\label{L7}E^{T_k}\{\text{exp}(C^{22}_k(\Psi^2_{T_{k-1}})^2)|\mathcal{F}_{t}\}=\mathrm{exp}[\Gamma^2(t,T_k)-\rho^2(t,T_k)(\Psi^2_{t})^2],\eeq
with $\rho^2(t,T), \Gamma^2(t,T)$ satisfying
\beq\label{L8}\left\{\begin{array}{l}\rho^2_t(t,T)-2\left[b^2+2(\sg^2)^2C^{22}(t,T_k)\right]\rho^2(t,T)
-2(\sg^2)^2(\rho^2(t,T))^2=0\\ \rho^2(T_k,T_k)=-C_k^{22} \\
\Gamma^2(t,T)=-(\sg^2)^2\displaystyle\int_t^T \rho^2(u,T)du.
\end{array}\right.\eeq
Finally, for the third expectation in the second line of  (\ref{manip5}),
we may take advantage of the fact that the dynamics of $\Psi_t^3$ do
not change when passing from the measure $Q$ to the forward measure
$Q^{T_k}$. We can then apply Lemma \ref{Laffine}, this time with
(see the third equation in (\ref{101}))
$$b(t):=-b^3,
\>\sg:=\sg^3,\>C=\bar C_k^{33}$$ and  ensuring that the parameters
$b^3$ and $\sigma^3$ of the process $\Psi^3$ are such that $C=\bar
C_k^{33}$ satisfies the assumption from the lemma.
 Since $b(t)$ is constant with
respect to time, also Corollary \ref{Caffine}  applies and we obtain
$$E^{T_k}\{\text{exp}(\bar C^{33}_k(\Psi^3_{T_{k-1}})^2)|\mathcal{F}_{t}\}=\mathrm{exp}[\Gamma^3(t,T_k)-\rho^3(t,T_k)(\Psi^3_{t})^2],$$
where \beq \label {eq:coeff-swap-1} \left\{\begin{array}{lcl}
\rho^3(t,T_k)&=&\frac{- 4b^3h^3_ke^{- 2b^3(T_k-t)}}{4(\sigma^3)^2h^3_ke^{-2b^3(T_k-t)}-1}\quad\mbox{with}
\quad h^3_k=\frac{\bar
C^{33}_k}{4 (\sigma^3)^2\bar C^{33}_k-4b^3}\\
 \Gamma^3(t,T_k)&=&-(\sg^3)^2\displaystyle\int_t^{T_k}\rho^3(u,T_k)du.
\end{array}\right.
\eeq

With the use of the explicit expressions for the expectations in (\ref{manip5}), and taking also into account the expression
for $p(t,T)$ in (\ref{pt}), it follows immediately that the
arbitrage-free swap price in (\ref{swa}) can be expressed according
to the following
\begin{prop}\label{Psw}
The price of a payer interest rate swap at $t\leq T_0$ is given by
\begin{equation}
\label{eq:swap-price-2}
\begin{array}{l} P^{Sw}(t; T_0, T_n, R)= \gm\displaystyle\sum_{k=1}^{n} p(t, T_k)
E^{T_k}
\left\{L(T_{k-1}; T_{k-1}, T_k) - R | \mathcal{F}_{t} \right\} \\
\>\>=\displaystyle\sum_{k=1}^{n}p(t,T_{k})\Bigl(D_{t,k}e^{-\rho^1(t,T_k)\Psi^1_{t}-
\rho^2(t,T_k)(\Psi^2_{t})^2-\rho^3(t,T_k)(\Psi^3_{t})^2}-(R\gm+1)\Bigr)\\
\>\>=\displaystyle\sum_{k=1}^{n}\Bigl(D_{t,k}e^{- A_{t,
k}}e^{-\tilde B^1_{t, k}\Psi^1_{t}-\tilde C^{22}_{t,
k}(\Psi^2_{t})^2-\tilde C^{33}_{t,
k}(\Psi^3_{t})^2}-(R\gm+1)e^{-A_{t, k}}e^{- B^1_{t,
k}\Psi^1_{t}-C^{22}_{t, k}(\Psi^2_{t})^2}\Bigr),\end{array}
\end{equation}
where \beq \label {eq:coeff-swap-2}
\begin{array}{l} A_{t, k}:=A(t,T_k),\> B^1_{t, k}:=B^1(t,T_k), \> C^{22}_{t, k}:=C^{22}(t,T_k)\\
\tilde B^1_{t,k}:=B^1_{t, k}+\rho^1(t,T_k), \>\tilde C^{22}_{t,k}:=
C^{22}_{t, k} +\rho^2(t,T_k), \>\tilde C^{33}_{t,k}:=
\rho^3(t,T_k)\\
D_{t,k}:=e^{
A_k}\mathrm{exp}[\Gamma^1(t,T_k)+\Gamma^2(t,T_k)+\Gamma^3(t,T_k)],\end{array}\eeq
with $\rho^i(t,T_k),\,\Gamma^i(t,T_k) \>(i=1,2,3)$ determined
according to (\ref{T12}), (\ref{L8}) and (\ref{eq:coeff-swap-1})
respectively and with $A_k$ as in (\ref{shorth}). \end{prop}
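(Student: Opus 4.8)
The plan is to start from the first equality in (\ref{swa}), which expresses $P^{Sw}(t;T_0,T_n,R)$ as a sum over $k$ of $p(t,T_k)$ times the $Q^{T_k}$-conditional expectation of $\gamma L(T_{k-1};T_{k-1},T_k)-R\gamma$. Using (\ref{6}) applied on the interval $[T_{k-1},T_k]$, the floating leg becomes $1/\bar p(T_{k-1},T_k)-1$, so the term in braces is $1/\bar p(T_{k-1},T_k)-(R\gamma+1)$. The first step is then to substitute the explicit form of $\bar p(T_{k-1},T_k)$ from the first line of (\ref{barp4}); with the shorthand (\ref{shorth}) this gives exactly (\ref{manip5}), reducing the whole problem to computing, under $Q^{T_k}$, the single expectation of $\exp((\kappa+1)B^1_k\Psi^1_{T_{k-1}}+C^{22}_k(\Psi^2_{T_{k-1}})^2+\bar C^{33}_k(\Psi^3_{T_{k-1}})^2)$ given $\mathcal F_t$.

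The second step is to factor this expectation using the independence of $\Psi^1,\Psi^2,\Psi^3$ — which is preserved under $Q^{T_k}$ since the density process (\ref{density}) only involves $\Psi^1$ and $\Psi^2$, as in (\ref{9})–(\ref{10}) — into a product of three one-dimensional expectations. For the $\Psi^1$-factor I would invoke the affine/Hull--White formula (\ref{affine1})–(\ref{HW}) with $a^1(t)=-(\sigma^1)^2B^1(t,T_k)$, $\delta=0$ and $K=-(\kappa+1)B^1_k$, obtaining (\ref{T11})–(\ref{T12}). For the $\Psi^2$-factor I would apply Lemma \ref{Laffine} with $b(t)=-[b^2+2(\sigma^2)^2C^{22}(t,T_k)]$, $\sigma=\sigma^2$, $C=C^{22}_k$, yielding (\ref{L7})–(\ref{L8}); here one must note the proviso that $b^2,\sigma^2$ are such that the finiteness hypothesis of the lemma holds. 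For the $\Psi^3$-factor, whose dynamics are unchanged under $Q^{T_k}$ (third line of (\ref{101})), I would use Lemma \ref{Laffine} with constant $b(t)\equiv -b^3$ and hence also Corollary \ref{Caffine}, giving the closed form (\ref{eq:coeff-swap-1}), again under the corresponding finiteness proviso on $b^3,\sigma^3,\bar C^{33}_k$.

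The third and final step is bookkeeping: multiply the three exponential factors together with the prefactor $e^{A_k}$ from (\ref{manip5}), collect the constant part into $D_{t,k}$ and the coefficients of $\Psi^1_t$, $(\Psi^2_t)^2$, $(\Psi^3_t)^2$ as in (\ref{eq:coeff-swap-2}), then multiply by $p(t,T_k)$ and subtract $(R\gamma+1)p(t,T_k)$. Substituting the explicit expression (\ref{pt}) for $p(t,T_k)$ in terms of $A(t,T_k),B^1(t,T_k),C^{22}(t,T_k)$ and merging exponents produces the third line of (\ref{eq:swap-price-2}), with $\tilde B^1_{t,k}=B^1_{t,k}+\rho^1(t,T_k)$, $\tilde C^{22}_{t,k}=C^{22}_{t,k}+\rho^2(t,T_k)$, $\tilde C^{33}_{t,k}=\rho^3(t,T_k)$. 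Summing over $k=1,\dots,n$ gives the claimed formula.

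The routine part is the exponent arithmetic in the last step; the only genuinely delicate point is the integrability: Lemma \ref{Laffine} requires $E^Q\{\exp[C(\Psi_T)^2]\}<\infty$, and since $C^{22}_k$ and $\bar C^{33}_k$ are the positive quantities from (\ref{eqdiff}), one has to check that they are small enough (equivalently, that the Riccati solutions $\rho^2,\rho^3$ do not blow up on $[t,T_k]$) for the relevant maturities; this is why Proposition \ref{Psw} is stated with the implicit standing assumption that the model parameters $b^2,\sigma^2,b^3,\sigma^3$ are in the admissible range. Modulo that proviso, the proof is just the assembly of the three expectation computations already prepared above, so I would simply say "collecting (\ref{T11}), (\ref{L7}) and the $\Psi^3$-expectation into (\ref{manip5}) and inserting (\ref{pt}) yields (\ref{eq:swap-price-2})."
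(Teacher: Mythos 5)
Your proposal is correct and follows essentially the same route as the paper: reduce the floating leg to $E^{T_k}\{1/\bar p(T_{k-1},T_k)\mid\F_t\}$ via (\ref{6}) and (\ref{barp4}), factor the resulting expectation over the independent factors under $Q^{T_k}$, evaluate the $\Psi^1$-term with the Hull--White formula (\ref{affine1})--(\ref{HW}) and the $\Psi^2$-, $\Psi^3$-terms with Lemma \ref{Laffine} (plus Corollary \ref{Caffine} for $\Psi^3$), then merge exponents with (\ref{pt}). Your explicit flagging of the integrability proviso for $C^{22}_k$ and $\bar C^{33}_k$ matches the caveats the paper itself states when invoking the lemma.
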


\section{Nonlinear/optional interest rate derivatives}\label{S.4}

In this section we consider the main nonlinear interest rate
derivatives with the Libor rate as underlying. They are also called
{\sl optional derivatives} since they have the form of an option. In
subsection \ref{S.3.3} we shall consider the case of caps and,
symmetrically, that of floors. In the subsequent subsection
\ref{S.3.4} we shall then concentrate on swaptions as options on a
payer swap of the type discussed in subsection \ref{S.3.5}.

\subsection{Caps and floors}\label{S.3.3}

Since floors can be treated in a completely symmetric way to the
caps simply by interchanging the roles of the fixed rate and the
Libor rate, we shall concentrate here on caps. Furthermore, to keep the
presentation simple, we consider here just a single caplet for the
time interval $[T,T+\Delta]$ and for a fixed rate $R$ (recall also
that we consider just one tenor $\Delta$). The payoff of the caplet
at time $T +\Delta$ is thus $\Delta (L(T; T, T +\Delta) - R)^+$,
assuming the notional $N=1$, and its time-$t$ price $P^{Cpl}(t;
T+\Delta, R)$ is given by the following risk-neutral pricing formula
under the forward measure $Q^{T+\Delta}$
$$
P^{Cpl}(t; T+\Delta, R) = \Delta\, p(t,
T+\Delta) E^{{T+\Delta}}
\left\{ \left(L(T; T, T+\Delta) - R \right)^+ \mid \mathcal{F}_t \right\}.
$$
In view of deriving pricing formulas, recall from subsection
\ref{S.3.1} that, under the $(T+\Delta)-$ forward measure, at time
$T$ the factors $\Psi_T^i$ have independent Gaussian distributions (see (\ref{11})) 
with  mean and variance given, for $i=1,2,3$, by
$$E^{T+\Delta}\{\Psi_t^i\}=\bar\al_t^i=\bar\al_t^i(b^i,\sg^i),\quad \quad
Var^{T+\Delta}\{\Psi_t^i\}=\bar\beta_t^i=\bar\beta_t^i(b^i,\sg^i).$$
In the formulas below we shall consider the joint probability
density function of $(\Psi_T^1, \Psi_T^2, \Psi_T^3)$ under the
$T+\Delta$ forward measure, namely, using the independence of the
processes $\Psi_t^i,\,(i=1,2,3)$,
\beq\label{distrib}f_{(\Psi_T^1,\Psi_T^2,\Psi_T^3)}(x_1,x_2,x_3)=\prod_{i=1}^3f_{\Psi_T^i}(x_i)=\prod_{i=1}^3{\cal
N}(x_i,\bar\al^i_T,\bar\beta^i_T), \eeq and use the shorthand notation
$f_i(\cdot)$ for $f_{\Psi_T^i}(\cdot)$ in the sequel. We shall also
write $\bar A, B^1, C^{22}$, $\bar C^{33}$ for the corresponding
functions evaluated at $(T,T+\Delta)$ and given in (\ref{A}),
(\ref{B^1}) and (\ref{eqdiff}) respectively.

Setting $\tilde R:=1+\Delta\,R$, and recalling the first equality
  in (\ref{barp4}), the time-$0$ price  of the
caplet can be expressed as
\beq\label{C1}\begin{array}{l}P^{Cpl}(0;T+\Delta, R)=\Delta\, p(0,
T+\Delta) E^{{T+\Delta}}
\left\{ \left(L(T; T, T+\Delta) - R \right)^+ \right\}\\ \\
\quad=p(0, T+\Delta) E^{{T+\Delta}}  \left\{ \left(
\frac{1}{\bar{p}(T, T+\Delta)} - \tilde{R} \right)^+\right\}\\
\\ \quad=p(0, T+\Delta) E^{{T+\Delta}}  \left\{ \left(e^{\bar
A+(\kappa+1)B^1\Psi_T^1+C^{22}(\Psi_T^2)^2+\bar
C^{33}(\Psi_T^3)^2}-\tilde{R} \right)^+\right\}\\ \\ \quad=p(0,
T+\Delta)\displaystyle\int_{\R^3}\left(e^{\bar
A+(\kappa+1)B^1x+C^{22}y^2+\bar C^{33}z^2}-\tilde{R} \right)^+\\
\hspace{5cm}\cdot
f_{(\Psi_T^1,\Psi_T^2,\Psi_T^3)}(x,y,z)d(x,y,z).\end{array}\eeq

To proceed, we extend to the multi-curve context an idea suggested in \citet{Jamshidian1989} (where it is applied
to the pricing of coupon bonds) by considering the function
\beq\label{C2} g(x,y,z):=\exp[\bar A+(\kappa+1)B^1x+C^{22}y^2+\bar
C^{33}z^2].\eeq Noticing that  $\bar C^{33}(T,T+\Delta)>0$ (see
(\ref{eqdiff}) together with the fact that  $h^3 >0$ and $ 2b^3 +
h^3>0$), for fixed $x,y$ the function $g(x,y,z)$ can be seen to be
continuous and increasing for $z\ge 0$ and decreasing for $z<0$ with
$\lim_{z\to\pm\infty}g(x,y,z)=+\infty.$ It will now be convenient to
introduce some objects according to the following
\begin{df}\label{DB1}
Let a set $M\subset \R^2$ be given by \beq\label{M}
M:=\{(x,y)\in\R^2\mid\>g(x,y,0)\le \tilde R\}\eeq and let $M^c$ be
its complement. Furthermore, for $(x,y)\in M$ let
$$\bar z^1=\bar z^1(x,y)\,,\quad \bar z^2=\bar z^2(x,y)$$ be the solutions of
$g(x,y,z)=\tilde R$. They satisfy $\bar z^1\leq 0 \leq \bar z^2$.
\end{df}

Notice that, for $z\leq \bar z^1\leq 0$ and $z\geq \bar z^2\geq 0$,
we have $g(x,y,z)\geq g(x,y,\bar z^k)=\tilde R$, and for $z \in
(\bar z^1, \bar z^2)$, we have $g(x,y,z)<\tilde R$ . In $M^c$ we
have $g(x,y,z)\ge g(x,y,0)>\tilde R$ and thus no solution of the
equation $g(x,y,z)=\tilde R$.

In view of the main result of this subsection, given in Proposition
\ref{P.3.2} below, we prove as a preliminary the following
\begin{lem}\label{Positivity}
Assuming that the (non-negative) coefficients $b^3,\sg^3$ in the
dynamics (\ref{7}) of the factor $\Psi^3_t$ satisfy the condition
\beq\label{positivity} b^3\ge \frac{\sg^3}{\sqrt{2}},\eeq we have
that $1-2\bar\beta^3_T\,\bar C^{33}>0$, where $\bar
C^{33}=\bar C^{33}(T,T+\Delta)$ 
is given by (\ref{eqdiff})  and where 
$\bar\beta^3_T=\frac{(\sigma^3)^2}{2b^3}(1-e^{-2b^3T})$ according to 
(\ref{11}).\end{lem}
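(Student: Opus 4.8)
The plan is to reduce the inequality $1-2\bar\beta^3_T\bar C^{33}>0$ to an elementary estimate in the single variable $T$, using the explicit closed-form expressions already available for both quantities. First I would substitute $\bar\beta^3_T=\frac{(\sigma^3)^2}{2b^3}(1-e^{-2b^3T})$ from \eqref{11} and the expression for $\bar C^{33}(T,T+\Delta)$ from \eqref{eqdiff}, evaluated at $t=T$ with second argument $T+\Delta$, so that $T-t$ there becomes $\Delta$:
\beq\label{plan:C33}
\bar C^{33}(T,T+\Delta)=\frac{2(e^{\Delta h^3}-1)}{2h^3+(2b^3+h^3)(e^{\Delta h^3}-1)},
\qquad h^3=\sqrt{4(b^3)^2+8(\sigma^3)^2}.
\eeq
Both factors are manifestly nonnegative (indeed positive for $T,\Delta>0$), so the product is positive and the question is only whether $2\bar\beta^3_T\bar C^{33}<1$.

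The key observation is that each factor admits a clean uniform upper bound. For the first, $\bar\beta^3_T\le\frac{(\sigma^3)^2}{2b^3}$ for all $T\ge 0$, since $1-e^{-2b^3T}<1$. For the second, I would bound $\bar C^{33}$ by dropping the nonnegative term $2h^3$ in its denominator, giving $\bar C^{33}(T,T+\Delta)\le\frac{2(e^{\Delta h^3}-1)}{(2b^3+h^3)(e^{\Delta h^3}-1)}=\frac{2}{2b^3+h^3}$, again a bound uniform in $T$ and $\Delta$. Multiplying, it suffices to prove the parameter inequality
\beq\label{plan:key}
2\cdot\frac{(\sigma^3)^2}{2b^3}\cdot\frac{2}{2b^3+h^3}<1,
\qquad\text{i.e.}\qquad 2(\sigma^3)^2<b^3\,(2b^3+h^3).
\eeq
Since $h^3\ge 2b^3$ (as $h^3=\sqrt{4(b^3)^2+8(\sigma^3)^2}\ge\sqrt{4(b^3)^2}=2b^3$), the right-hand side is at least $b^3\cdot 4b^3=4(b^3)^2$, so \eqref{plan:key} follows as soon as $2(\sigma^3)^2\le 4(b^3)^2$, that is $b^3\ge\sigma^3/\sqrt{2}$ — which is precisely the hypothesis \eqref{positivity}. (One should check whether the inequality in \eqref{positivity} being non-strict still yields the strict conclusion; it does, because the bound $\bar C^{33}\le\frac{2}{2b^3+h^3}$ is strict whenever $h^3>0$, which holds here since $\sigma^3>0$, so there is slack to absorb the boundary case.)

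I do not expect a serious obstacle: the whole argument is two uniform-in-$T$ bounds followed by the trivial estimate $h^3\ge 2b^3$. The only point requiring a little care is bookkeeping of which argument of $\bar C^{33}$ is which — making sure $\bar C^{33}=\bar C^{33}(T,T+\Delta)$ so the relevant "time to maturity" is $\Delta$ and not $0$ (at $\Delta=0$ one would get $\bar C^{33}=0$ and the statement would be vacuous) — and confirming that the strictness is genuinely preserved at the endpoint $b^3=\sigma^3/\sqrt 2$, as noted above. An alternative, slightly slicker route would be to note that $\bar C^{33}(t,T)$ is bounded above by the steady state of its Riccati equation \eqref{C^22}, namely the positive root of $2(\sigma^3)^2x^2+2b^3x-1=0$, which equals exactly $\frac{1}{b^3+\sqrt{(b^3)^2+2(\sigma^3)^2}}=\frac{2}{2b^3+h^3}$, recovering the same bound; but the direct denominator estimate above is the most transparent and I would use that.
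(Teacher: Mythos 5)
Your argument is correct and is essentially the paper's own proof: both drop the positive term $2h^3$ (equivalently $2b^3h^3/(\sigma^3)^2$ after rescaling) from the denominator of $\bar C^{33}$, reduce to the key inequality $2(\sigma^3)^2\le b^3(2b^3+h^3)$, and close it via $h^3\ge 2b^3$ together with the hypothesis $b^3\ge\sigma^3/\sqrt{2}$. The paper secures strictness from $1-e^{-2b^3T}<1$ rather than from the strictness of the denominator bound, but this is an immaterial difference.
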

\begin{proof} From the definitions of $\bar\beta^3_T$ and $\bar
C^{33}$ we may write \beq\label{int1} 1-2\bar\beta^3_T\,\bar C^{33}=
1-\left(1-e^{-2b^3T}\right)\,\frac{2\left(e^{\Delta\,h^3}-1\right)}{2\frac{b^3h^3}{(\sg^3)^2}+\frac{b^3}{(\sg^3)^2}
(2b^3+h^3)\left(e^{\Delta\,h^3}-1\right)}.\eeq Notice next that
$b^3>0$ implies that $1-e^{-2b^3T}\in(0,1)$ and that
$\frac{b^3h^3}{(\sg^3)^2}\ge 0$. From (\ref{int1}) it then follows
that a sufficient condition for $1-2\bar\beta^3_T\,\bar C^{33}>0$ to
hold is that \beq\label{int2} 2\le
\frac{b^3}{(\sg^3)^2}\,(2b^3+h^3).\eeq Given that, see definition
after (\ref{eqdiff}), $h^3=2\sqrt{(b^3)^2+2(\sg^3)^2}\ge 2b^3$, the
condition (\ref{int2}) is satisfied under our assumption
(\ref{positivity}).
\qed
\end{proof}

\begin{prop}\label{P.3.2}
Under assumption (\ref{positivity}) we have that the time-$0$ price of the caplet for the time interval
$[T,T+\Delta]$ and with fixed rate $R$ is given by \beq\label{C3}
\begin{array}{l}P^{Cpl}(0;T+\Delta, R)=p(0,T+\Delta)\Biggl[\displaystyle\int_M e^{\bar
A+(\kappa+1)B^1x+C^{22}(y)^2}\\ \\
\quad\cdot\Bigl[\gm(\bar\al^3_T,\bar\beta^3_T,\bar
C^{33})\left(\Phi(d^1(x,y))+\Phi(-d^2(x,y))\right)\\
\hspace{.4cm}-e^{\bar C^{33}(\bar z^1(x,y))^2}\Phi(d^3(x,y))+e^{\bar
C^{33}(\bar z^2(x,y))^2}\Phi(-d^4(x,y))\Bigr]f_1(x)f_2(y)dxdy\\
\\ \quad+\gm(\bar\al^3_T,\bar\beta^3_T,\bar C^{33})\displaystyle\int_{M^c}e^{\bar
A+(\kappa+1)B^1x+C^{22}(y)^2}f_1(x)f_2(y)dxdy\\ \hspace{5cm}-\tilde
R\,Q^{T+\Delta}\left\{(\Psi_T^1,\Psi_T^2)\in
M^c\right\}\Biggr],\end{array}\eeq where $\Phi(\cdot)$ is the
cumulative standard Gaussian distribution function, $M$ and $M^c$
are as in Definition \ref{DB1}, \beq\label{C4}\begin{cases}
d^1(x,y):=\frac{\sqrt{1-2\bar\beta^3_T\bar C^{33}}\bar z^1(x,y)-(\bar\alpha^3_T-\theta\bar\beta^3_T)}{\sqrt{\bar\beta^3_T}}\\
d^2(x,y):=\frac{\sqrt{1-2\bar\beta^3_T\bar C^{33}}\bar z^2(x,y)-(\bar\alpha^3_T-\theta\bar\beta^3_T)}{\sqrt{\bar\beta^3_T}}\\
d^3(x,y):=\frac{\bar z^1(x,y)-\bar\alpha^3_T}{\sqrt{\bar\beta^3_T}}\\
d^4(x,y):=\frac{\bar z^2(x,y)-\bar\alpha^3_T}{\sqrt{\bar\beta^3_T}}
\end{cases}\eeq
with $\theta:=\frac{\bar\alpha^3_T\left(1-1/\sqrt{1-2\bar\beta^3_T\bar
C^{33}}\right)}{\bar\beta^3_T}$, which by Lemma \ref{Positivity} is
well defined under the given assumption (\ref{positivity}), and with $\gm(\bar\al^3_T,\bar\beta^3_T,\bar
C^{33}):=\frac{e^{(\frac{1}{2}(\theta)^2\bar\beta^3_T-\bar\alpha^3_T\theta)}}
{\sqrt{1-2\bar\beta^3_T\bar C^{33}}}.$
\end{prop}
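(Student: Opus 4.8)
The plan is to start from the integral expression \eqref{C1} for $P^{Cpl}(0;T+\Delta,R)$ and evaluate the expectation by conditioning on $(\Psi_T^1,\Psi_T^2)=(x,y)$ and integrating out $\Psi_T^3$ first, exploiting the independence structure in \eqref{distrib}. For fixed $(x,y)$, write $P:=e^{\bar A+(\kappa+1)B^1 x + C^{22}y^2}$ so that the integrand of \eqref{C1} becomes $\bigl(P\,e^{\bar C^{33}z^2}-\tilde R\bigr)^+$. By the analysis following Definition \ref{DB1}: if $(x,y)\in M^c$ then $g(x,y,z)\ge g(x,y,0)>\tilde R$ for all $z$, so the positive part is active everywhere and the inner $z$-integral is $P\,E^{T+\Delta}\{e^{\bar C^{33}(\Psi_T^3)^2}\}-\tilde R$; if $(x,y)\in M$ then the positive part is active exactly on $\{z\le \bar z^1(x,y)\}\cup\{z\ge \bar z^2(x,y)\}$, so the inner integral splits as $P\int_{(-\infty,\bar z^1]\cup[\bar z^2,\infty)} e^{\bar C^{33}z^2} f_3(z)\,dz - \tilde R\,\mathbb{P}^{T+\Delta}\{\Psi_T^3\le \bar z^1 \text{ or } \ge \bar z^2\}$.

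Next I would carry out the two Gaussian computations that produce all the $\Phi(\cdot)$ terms. The second (the $\tilde R$ term) is immediate: $\mathbb{P}^{T+\Delta}\{\Psi_T^3\le \bar z^1\}+\mathbb{P}^{T+\Delta}\{\Psi_T^3\ge \bar z^2\} = \Phi(d^3)+\Phi(-d^4)$ with $d^3,d^4$ as in \eqref{C4}, using mean $\bar\alpha^3_T$ and variance $\bar\beta^3_T$. For the first, I complete the square in the exponent $\bar C^{33}z^2 - \tfrac{(z-\bar\alpha^3_T)^2}{2\bar\beta^3_T}$: this is a quadratic with negative leading coefficient precisely when $1-2\bar\beta^3_T\bar C^{33}>0$, which is exactly what Lemma \ref{Positivity} guarantees under assumption \eqref{positivity}. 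Completing the square rewrites $e^{\bar C^{33}z^2}f_3(z)$ as $\gamma(\bar\alpha^3_T,\bar\beta^3_T,\bar C^{33})$ times a Gaussian density with new mean $\bar\alpha^3_T/(1-2\bar\beta^3_T\bar C^{33})$ — equivalently $\bar\alpha^3_T-\theta\bar\beta^3_T$ with $\theta$ as defined — and variance $\bar\beta^3_T/(1-2\bar\beta^3_T\bar C^{33})$; the prefactor $\gamma$ is the ratio of normalizing constants, which is the stated $e^{(\tfrac12\theta^2\bar\beta^3_T-\bar\alpha^3_T\theta)}/\sqrt{1-2\bar\beta^3_T\bar C^{33}}$. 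Integrating this new Gaussian density over $(-\infty,\bar z^1]\cup[\bar z^2,\infty)$ and rescaling by $\sqrt{1-2\bar\beta^3_T\bar C^{33}}$ yields $\Phi(d^1)+\Phi(-d^2)$ with $d^1,d^2$ as in \eqref{C4}. I would also need the full-line integral $E^{T+\Delta}\{e^{\bar C^{33}(\Psi_T^3)^2}\}=\gamma(\bar\alpha^3_T,\bar\beta^3_T,\bar C^{33})$ for the $M^c$ case, which is the same computation without the truncation.

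Finally I would reassemble: on $M$ the inner integral contributes $P\cdot\gamma\bigl(\Phi(d^1)+\Phi(-d^2)\bigr) - \tilde R\bigl(\Phi(d^3)+\Phi(-d^4)\bigr)$, but one must be careful — the displayed formula \eqref{C3} instead writes the $M$-part as $P\cdot\gamma(\Phi(d^1)+\Phi(-d^2)) - P\,e^{\bar C^{33}(\bar z^1)^2}\Phi(d^3) + P\,e^{\bar C^{33}(\bar z^2)^2}\Phi(-d^4)$, and collects the $-\tilde R$ contribution only through the term $-\tilde R\,Q^{T+\Delta}\{(\Psi_T^1,\Psi_T^2)\in M^c\}$. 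Reconciling these requires using the defining relation $g(x,y,\bar z^k)=\tilde R$, i.e. $P\,e^{\bar C^{33}(\bar z^k)^2}=\tilde R$ for $(x,y)\in M$, so that $P\,e^{\bar C^{33}(\bar z^1)^2}\Phi(d^3) = \tilde R\,\Phi(d^3)$ and likewise for $\bar z^2$; then the $-\tilde R$ terms over $M$ recombine with the $M^c$ contribution $P\cdot\gamma - \tilde R$ so that the probability term aggregates to $Q^{T+\Delta}\{(\Psi_T^1,\Psi_T^2)\in M^c\}$ (noting $\Phi(d^3)+\Phi(-d^4)$ is exactly the $\Psi_T^3$-probability of the complement of $(\bar z^1,\bar z^2)$, which integrates against $f_1 f_2$ over $M$ to the measure of that region). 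Multiplying through by $p(0,T+\Delta)$ and integrating the surviving $(x,y)$-dependence against $f_1(x)f_2(y)$ over $M$ and $M^c$ produces \eqref{C3}. The main obstacle is the bookkeeping in this last reassembly step — tracking exactly where the $\tilde R$ contributions go and verifying the $g(x,y,\bar z^k)=\tilde R$ substitution is applied consistently so that the final grouping matches \eqref{C3}; the Gaussian integrals themselves are routine once Lemma \ref{Positivity} licenses the completion of the square.
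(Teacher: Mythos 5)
Your proposal is correct and follows essentially the same route as the paper's proof: split the outer integral over $M$ and $M^c$ from Definition \ref{DB1}, integrate out $\Psi^3_T$ by a Gaussian completion of the square licensed by Lemma \ref{Positivity} (the paper performs the equivalent change of variables $\zeta=\sqrt{1-2\bar\beta^3_T\bar C^{33}}\,z$ in \eqref{int11}), and rewrite the $\tilde R$ contributions on $M$ via $g(x,y,\bar z^k)=\tilde R$, leaving only the $M^c$ piece as an explicit probability. The one point to flag is that carrying out this reassembly actually produces $-e^{\bar C^{33}(\bar z^2(x,y))^2}\Phi(-d^4(x,y))$ in the $M$-bracket --- exactly as the paper's own computation in \eqref{C6} and \eqref{int112} does --- so the $+$ sign on that term in the displayed formula \eqref{C3} is a typo in the statement rather than something your reconciliation can, or should, reproduce.
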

\begin{rem}\label{RM}
Notice that, once the set $M$ and its complement $M^c$ from
Definition \ref{DB1} are made explicit, the integrals, as well as the
probability in (\ref{C3}), can be computed explicitly.  
\end{rem}

\begin{proof} On the basis of the sets $M$ and $M^c$ we can continue (\ref{C1}) as
\beq\label{C5}\begin{array}{l}P^{Cpl}(0;T+\Delta, R)=p(0,
T+\Delta)\displaystyle\int_{\R^3}\left(e^{\bar
A+(\kappa+1)B^1x+C^{22}y^2+\bar C^{33}z^2}-\tilde{R} \right)^+ \\
\hspace{5cm}\cdot f_{(\Psi_T^1,\Psi_T^2,\Psi_T^3)}(x,y,z)d(x,y,z) \\ \\
=p(0, T+\Delta)\displaystyle\int_{M\times\R}\left(e^{\bar
A+(\kappa+1)B^1x+C^{22}y^2+\bar C^{33}z^2}-\tilde{R} \right)^+\\
\hspace{5cm}\cdot f_{(\Psi_T^1,\Psi_T^2,\Psi_T^3)}(x,y,z)d(x,y,z)\\ \\
\hspace{1cm}+p(0,
T+\Delta)\displaystyle\int_{M^c\times\R}\left(e^{\bar
A+(\kappa+1)B^1x+C^{22}y^2+\bar C^{33}z^2}-\tilde{R} \right)^+\\
\hspace{5cm}\cdot f_{(\Psi_T^1,\Psi_T^2,\Psi_T^3)}(x,y,z)d(x,y,z)\\ \\
=: P^{1}(0;T+\Delta)+P^{2}(0;T+\Delta).\end{array}\eeq We shall next
compute separately the two terms in the last equality in (\ref{C5})
distinguishing between two cases according to whether $(x,y)\in M$
or $(x,y)\in M^c$.\smallskip

\noindent{\sl Case i):} For $(x,y)\in M$ we have from Definition
\ref{DB1} that there exist $\bar z^1(x,y)\leq 0$ and $\bar
z^2(x,y)\geq 0$ so that for $z\in [\bar z^1, \bar z^2]$ we have
$g(x,y,z)\le g(x,y,\bar z^k)=\tilde R$. For $P^{1}(0;T+\Delta)$ we
now obtain \beq\label{C6}\begin{array}{lcl}P^{1}(0;T+\Delta)&=&p(0,
T+\Delta)\\
&\cdot&\displaystyle\int_{M}e^{\bar A+(\kappa+1)B^1x+C^{22}y^2}
\Biggl( \int_{-\infty}^{\bar z^1(x,y)}(e^{\bar C^{33}z^2}-e^{\bar
C^{33}(\bar z^1)^2})
f_3(z)dz\\
&+&\displaystyle\int^{+\infty}_{\bar z^2(x,y)}(e^{\bar
C^{33}z^2}-e^{\bar C^{33}(\bar
z^2)^2})f_3(z)dz\Biggr)f_2(y)f_1(x)dydx.\end{array}\eeq Next, using
the results of subsection \ref{S.3.1} concerning the Gaussian
distribution $f_3(\cdot)=f_{\Psi_T^3}(\cdot)$, we obtain the
calculations in (\ref{int11}) below, where, recalling Lemma
\ref{Positivity}, we make successively the following changes of
variables: $\zeta:=\sqrt{1-2\bar\beta^3_T\bar C^{33}}z$,
$\theta:=\frac{\bar\alpha^3_T(1-1/\sqrt{1-2\bar\beta^3_T\bar
C^{33}})}{\bar\beta^3_T}$,
$s:=\frac{\zeta-(\bar\alpha^3_T-\theta\bar\beta^3_T)}{\sqrt{\bar\beta^3_T}}$
and where $d^i(x,y),\>i=1,\cdots,4$ are as defined in (\ref{C4})

\begin{align}
\label{int11}
\notag &\displaystyle{\int_{-\infty}^{\bar z^1(x,y)}e^{\bar
C^{33}z^2}f_3(z)dz=
\int_{-\infty}^{\bar z^1(x,y)}e^{\bar C^{33}z^2}\frac{1}{\sqrt{2\pi\bar\beta^3_T}}e^{-\frac{1}{2}\frac{(z-\bar\alpha^3_T)^2}{\bar\beta^3_T}}dz}\\
\notag &\displaystyle{=\int_{-\infty}^{\bar
z^1(x,y)}\frac{1}{\sqrt{2\pi\bar\beta^3_T}}e^{-\frac{1}{2}
\frac{(\sqrt{1-2\bar\beta^3_T\bar
C^{33}}z-\bar\alpha^3_T)^2}{\bar\beta^3_T}}
e^{-\frac{\bar\alpha^3_T(\sqrt{1-2\bar\beta^3_T\bar C^{33}}-1)}{\bar\beta^3_T}z}dz}\\
&\displaystyle{=\int_{-\infty}^{\sqrt{1-2\bar\beta^3_T\bar C^{33}}
\bar
z^1(x,y)}\frac{1}{\sqrt{2\pi\bar\beta^3_T}}e^{-\frac{1}{2}\frac{(\zeta-\bar\alpha^3_T)^2}{\bar\beta^3_T}}
e^{-\frac{\bar\alpha^3_T(1-1/\sqrt{1-2\bar\beta^3_T\bar C^{33}})}{\bar\beta^3_T}\zeta}\frac{1}{\sqrt{1-2\bar\beta^3_T\bar C^{33}}}d\zeta}\\
\notag &\displaystyle{=\frac{1}{\sqrt{1-2\bar\beta^3_T\bar
C^{33}}}\int_{-\infty}^{\sqrt{1-2\bar\beta^3_T\bar C^{33}}\bar
z^1(x,y)}
\frac{1}{\sqrt{2\pi\bar\beta^3_T}}e^{-\frac{1}{2}\frac{(\zeta-\bar\alpha^3_T)^2}{\bar\beta^3_T}}e^{-\theta\zeta}d\zeta}\\
\notag &\displaystyle{=\frac{e^{(\frac{1}{2}
(\theta)^2\bar\beta^3_T-\bar\alpha^3_T\theta)}}{\sqrt{1-2\bar\beta^3_T\bar
C^{33}}}\int_{-\infty}^{d^1(x,y)}\frac{1}{\sqrt{2\pi}}e^{-\frac{s^2}{2}}ds}
\displaystyle{=\frac{e^{(\frac{1}{2}(\theta)^2\bar\beta^3_T-\bar\alpha^3_T\theta)}}{\sqrt{1-2\bar\beta^3_T\bar
C^{33}}}\Phi(d^1(x,y))}.
\end{align}

On the other hand, always using the results of subsection
\ref{S.3.1} concerning the Gaussian distribution
$f_3(\cdot)=f_{\Psi_T^3}(\cdot)$ and making this time the change of
variables $\zeta:=\frac{(z-\bar\alpha^3_T)}{\sqrt{\bar\beta^3_T}}$,
we obtain
\begin{equation}\label{int111}
\begin{split}
&\int_{-\infty}^{\bar z^1(x,y)}e^{\bar C^{33}(\bar z^1)^2}f_3(z)dz=
e^{\bar C^{33}(\bar z^1)^2}\int_{-\infty}^{\bar z^1(x,y)}\frac{1}{\sqrt{2\pi\bar\beta^3_T}}e^{-\frac{1}{2}\frac{(z-\bar\alpha^3_T)^2}{\bar\beta^3_T}}dz\\
&=e^{\bar C^{33}(\bar
z^1)^2}\int_{-\infty}^{d^3(x,y)}\frac{1}{\sqrt{2\pi}}e^{-\frac{1}{2}\zeta^2}d\zeta=e^{\bar
C^{33}(\bar z^1)^2}\Phi(d^3(x,y)).
\end{split}
\end{equation}
Similarly, we have
\begin{equation}\label{int112}
\begin{split}
&\int^{+\infty}_{\bar z^2(x,y)}e^{\bar
C^{33}z^2}f_3(z)dz=\frac{1}{\sqrt{1-2\bar\beta^3_T\bar C^{33}}}
e^{(\frac{1}{2}(\theta)^2\bar\beta^3_T-\bar\alpha^3_T\theta)}\Phi(-d^2(x,y))\\\\
&\int^{+\infty}_{\bar z^2(x,y)}e^{\bar C^{33}(\bar
z^1)^2}f_3(z)dz=e^{\bar C^{33}(\bar z^2)^2}\Phi(-d^4(x,y)).
\end{split}
\end{equation}

\noindent{\sl Case ii):} We come next to the case $(x,y)\in M^{c}$,
for which $g(x,y,z)\geq g(x,y,0)>\tilde R$. For $P^{2}(0;T+\Delta)$
we obtain
\begin{equation}\label{int113}
\begin{split}
&P^{2}(0;T+\Delta)=p(0,T+\Delta)\int_{M^c\times\mathbb{R}}\Bigl(e^{\bar
A+(\kappa+1)B^1x+C^{22}y^2
+\bar C^{33}z^2}-\tilde R\Bigr)\\
&\hspace{5cm}\cdot f_3(z)f_2(y)f_1(x)dzdydx\\
&=p(0,T+\Delta)\Bigl(e^{\bar A}\int_{M^c}e^{(\kappa+1)B^1x+C^{22}y^2}f_1(x)f_2(y)dxdy\int_{\mathbb R}e^{\bar C^{33}z^2}f_3(z)dz\\
&\hspace{4cm}-\tilde R Q^{T+\Delta}[(\Psi^1_{T},\Psi^2_T)\in M^c]\Bigr)\\
&=p(0,T+\Delta)\Bigl(e^{\bar A}\Bigl[\int_{M^c}e^{(\kappa+1)B^1x+C^{22}y^2}f_1(x)f_2(y)dxdy\Bigr]
\frac{e^{(\frac{1}{2}(\theta^3)^2\bar\beta^3_T-\bar\alpha^3_T\theta^3)}}{\sqrt{1-2\bar\beta^3_T\bar C^{33}}}\\
&\hspace{4cm}-\tilde R Q^{T+\Delta}[(\Psi^1_{T},\Psi^2_T)\in
M^c]\Bigr),
\end{split}
\end{equation}
where we computed the integral over $\mathbb{R}$ analogously to
(\ref{int11}).

Adding the two expressions derived for Cases i) and ii), we
obtain the statement of the proposition.\qed
\end{proof}

\subsection{Swaptions}\label{S.3.4}

We start by recalling some of the most relevant aspects of a (payer)
swaption. Considering a swap (see subsection \ref{S.3.5}) for a
given collection of dates $0 \leq T_0 < T_1 < \cdots < T_n$, a
swaption is an option to enter the swap  at a pre-specified
initiation date $T\leq T_0$, which is thus also the maturity of the
swaption and that, for simplicity of notation, we assume to coincide
with $T_0$, i.e. $T=T_0$. The arbitrage-free swaption price at $t\le
T_0$ can be computed as \beq\label{swapt} P^{Swn}(t; T_0, T_n, R) =
p(t, T_0) E^{T_0} \left\{ \left( P^{Sw}(T_0; T_n, R) \right)^{+} |
\mathcal{F}_t \right\},\eeq where we  have used  the shorthand
notation $P^{Sw}(T_0; T_n, R)  = P^{Sw}(T_0; T_0, T_n, R). $

We first state the next Lemma, that follows immediately from the
expression for $\rho^3(t,T_k)$ and the corresponding expression for
$h^3_k$ in (\ref{eq:coeff-swap-1}).
\begin{lem}\label{L.3.1} We have the equivalence
\beq\label{equiv} \rho^3(t,T_k)>0\Leftrightarrow
h^3_k\in\Bigl(0,\frac{1}{4(\sigma^3)^2e^{-2b^3(T_k-t)}}\Bigr).\eeq
\end{lem}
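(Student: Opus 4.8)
The plan is to read off the sign of $\rho^3(t,T_k)$ directly from its closed-form expression in (\ref{eq:coeff-swap-1}),
\[
\rho^3(t,T_k)=\frac{-4b^3h^3_ke^{-2b^3(T_k-t)}}{4(\sigma^3)^2h^3_ke^{-2b^3(T_k-t)}-1},
\]
viewing the right-hand side as a rational function of $h^3_k$ with $t$ and $T_k$ held fixed. First I would record the signs of the constituent quantities: $b^3>0$ (assumed in (\ref{7})), $e^{-2b^3(T_k-t)}>0$, and $(\sigma^3)^2>0$. Hence the numerator $-4b^3h^3_ke^{-2b^3(T_k-t)}$ has sign opposite to that of $h^3_k$, and vanishes precisely when $h^3_k=0$; I would also note that the expression is only meaningful when the denominator $4(\sigma^3)^2h^3_ke^{-2b^3(T_k-t)}-1$ is nonzero, which fails exactly at $h^3_k=\frac{1}{4(\sigma^3)^2e^{-2b^3(T_k-t)}}$, i.e. at the right endpoint of the claimed interval.

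The remaining argument is a three-way case split on the sign of $h^3_k$. If $h^3_k=0$, the numerator is zero, so $\rho^3(t,T_k)=0$ and the strict inequality fails, consistent with the left endpoint being excluded. If $h^3_k<0$, the numerator is strictly positive while the denominator is a negative number minus one, hence strictly negative, so $\rho^3(t,T_k)<0$; thus both sides of the claimed equivalence are false here, and indeed such $h^3_k$ lie outside $(0,\cdot)$. Finally, if $h^3_k>0$, the numerator is strictly negative, so $\rho^3(t,T_k)>0$ if and only if the denominator is also strictly negative, i.e. $4(\sigma^3)^2h^3_ke^{-2b^3(T_k-t)}<1$, equivalently $h^3_k<\frac{1}{4(\sigma^3)^2e^{-2b^3(T_k-t)}}$. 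Collecting the three cases gives exactly $\rho^3(t,T_k)>0\Leftrightarrow h^3_k\in\bigl(0,\tfrac{1}{4(\sigma^3)^2e^{-2b^3(T_k-t)}}\bigr)$.

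There is essentially no analytic obstacle: the proof is a sign computation for a one-variable rational function, and the only point worth a sentence of care is checking that the value of $h^3_k$ making the denominator vanish coincides with the endpoint omitted from the interval, so that no case is dropped. I would close by remarking that the same computation shows $\rho^3(t,T_k)$ changes sign only through $0$ at $h^3_k=0$ and through a pole at the excluded endpoint, which is the structural fact the subsequent swaption pricing analysis relies on.
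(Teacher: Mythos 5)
Your proof is correct and is exactly the sign computation the paper has in mind: the paper offers no written argument beyond stating that the lemma ``follows immediately from the expression for $\rho^3(t,T_k)$ and the corresponding expression for $h^3_k$'', and your three-way case split on the sign of $h^3_k$ (using $b^3,\sigma^3>0$ from (\ref{7})) is the obvious way to make that immediate claim explicit. Your added remarks about the pole at the right endpoint and the zero at $h^3_k=0$ match the paper's own observation following the lemma that these boundary values are excluded.
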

This lemma prompts us to split the swaption pricing problem into two
cases: \beq\label{Cases}\begin{array}{l}
\mbox{\bf Case 1):}\quad h^3_k<0\>
\mbox{or}\,\>h^3_k>\frac{1}{4(\sigma^3)^2e^{-2b^3(T_k-t)}} \\
 \mbox{\bf Case 1):}\quad
0<h^3_k<\frac{1}{4(\sigma^3)^2e^{-2b^3(T_k-t)}}.
\end{array}\eeq
Note from the definition of  $\rho^3(t,T_k)$ that $h^3_k \neq
\frac{1}{4(\sigma^3)^2e^{-2b^3(T_k-t)}}$ and that $h^3_k =0 $ would
imply $\bar C_k^{33}=0$ which corresponds to a trivial case in which
the factor $\Psi^3$ is not present in the dynamics of the spread
$s$, hence the inequalities in Case 1) and Case 2) above are indeed
strict.

To proceed, we shall introduce some more notation. In particular,
instead of only one function $g(x,y,z)$ as in (\ref{C2}), we shall
consider also a function $h(x,y)$, more precisely, we shall define
here the continuous functions \beq\label{T15}
g(x,y,z):=\sum_{k=1}^{n}D_{0,k}e^{- A_{0, k}}e^{-\tilde B^1_{0,
k}x-\tilde C^{22}_{0, k}y^2-\tilde C^{33}_{0, k}z^2}\eeq
\beq\label{T16} h(x,y) :=\sum_{k=1}^{n}(R\gm+1)e^{- A_{0,
k}}e^{-B^1_{0, k} x-C^{22}_{0, k} y^2},\eeq with the coefficients
given by (\ref{eq:coeff-swap-2}) for $t=T_0$. Note that by a slight abuse of notation we write $D_{0, k}$ for $D_{T_0, k}$ and similarly for other coefficients above, always  meaning $t=T_0$ in (\ref{eq:coeff-swap-2}). We distinguish the two
cases specified in (\ref{Cases}):

For {\sl Case 1)} we have (see (\ref{eq:coeff-swap-2}) and Lemma
\ref{L.3.1}) that $\tilde C^{33}_{0,k}=\rho^3(T_0,T_k)<0$ for
all $k=1,\cdots, n$, and so the function $g(x,y,z)$ in (\ref{T15})
is, for given $(x,y)$, monotonically increasing for $z\ge 0$ and
decreasing for $z<0$ with
$$\lim_{z\to\pm\infty}g(x,y,z)=+\infty.$$

For {\sl Case 2)}  we have instead that $\tilde
C^{33}_{0,k}=\rho^3(T_0,T_k)>0$ for all $k=1,\cdots, n$ and so
the nonnegative function $g(x,y,z)$ in (\ref{T15}) is, for given
$(x,y)$, monotonically decreasing for $z\ge 0$ and increasing for
$z<0$ with
$$\lim_{z\to\pm\infty}g(x,y,z)=0.$$

Analogously to Definition \ref{DB1} we next introduce the following
objects
\begin{df}\label{DB2}
Let a set $\bar M\subset \R^2$ be given by \beq\label{barM} \bar
M:=\{(x,y)\in\R^2\mid\>g(x,y,0)\leq h(x,y)\}.\eeq Since $g(x,y,z)$
and $h(x,y)$ are continuous, $\bar M$ is closed, measurable and
connected. Let $\bar M^c$ be its complement. Furthermore, we define
two functions $\bar z^1(x,y)$ and $\bar z^2(x,y)$ distinguishing
between the two Cases 1) and 2) as specified in (\ref{Cases}).
\begin{description}\item[Case 1)]
If $(x,y)\in \bar M$, we have $g(x,y,0)\leq h(x,y)$ and so there exist
$\bar z^1(x,y)\le 0$ and $\bar z^2(x,y)\ge 0$ for which, for
$i=1,2$, \beq\label{barz1}\begin{array}{lcl}g(x,y,\bar
z^i)&=&\displaystyle\sum_{k=1}^{n}D_{0, k}e^{- A_{0, k}}e^{-\tilde
B^1_{0, k}x- \tilde C^{22}_{0, k}y^2-\tilde C^{33}_{0, k}(\bar
z^i)^2}\\&=&\displaystyle\sum_{k=1}^{n}(R\gm+1)e^{- A_{0, k}}e^{-
B^1_{0, k}x-C^{22}_{0, k} y^2}=h(x,y)\end{array}\eeq and, for
$z\not\in[\bar z^1,\bar z^2],$ one has $g(x,y,z)\ge g(x,y,\bar
z^i)$.

If $(x,y)\in \bar M^c$, we have $g(x,y,0)>h(x,y)$ so that
$g(x,y,z)\ge g(x,y,0)>h(x,y)$ for all $z$ and we have no points
corresponding to $\bar z^1(x,y)$ and $\bar z^2(x,y)$ above.\medskip

\item[Case 2)] If $(x,y)\in \bar M$, we have, as for Case 1),
$g(x,y,0)\leq h(x,y)$ and so there exist $\bar z^1(x,y)\le 0$ and $\bar
z^2(x,y)\ge 0$ for which, for $i=1,2$, (\ref{barz1}) holds. However,
this time it is for $z\in[\bar z^1,\bar z^2]$ that one has
$g(x,y,z)\ge g(x,y,\bar z^i)$.

If $(x,y)\in M^c$, then we are in the same situation as for Case 1).
\end{description}\end{df}

Starting from (\ref{swapt}) combined with (\ref{eq:swap-price-2})
and taking into account the set $\bar M$ according to Definition
\ref{DB2}, we can obtain the following expression for the swaption
price at $t=0$. As for the caps, here too we consider the joint
Gaussian distribution
$f_{(\Psi_{T_0}^1,\Psi_{T_0}^2,\Psi_{T_0}^3)}(x,y,z)$ of the factors
under the $T_0-$forward measure $Q^{T_0}$ and we have
\beq\label{Swn}\begin{array}{l} P^{Swn}(0; T_0, T_n, R) = p(0, T_0)
E^{T_0} \left\{ \left(
P^{Sw}(T_0; T_n, R) \right)^{+} | \mathcal{F}_0 \right\}\\
=p(0,T_0)\displaystyle\int_{\mathbb R^3}\Bigl[\displaystyle\sum_{k=1}^{n}D_{0, k}
e^{-A_{0,k}}\mathrm{exp}(-\tilde B^1_{0, k}x-\tilde C^{22}_{0, k}y^2-\tilde C^{33}_{0, k}z^2)\\
\>\>-\displaystyle\sum_{k=1}^{n}(R\gm+1)e^{- A_{0,
k}}\mathrm{exp}(-B^1_{0, k} x- C^{22}_{0, k}
y^2)\Bigr]^+f_{(\Psi^1_{T_0},\Psi^2_{T_0},
\Psi^3_{T_0})}(x,y,z)dxdydz\\
=p(0,T_0)\displaystyle\int_{\bar M\times\mathbb
R}\Bigl[\displaystyle\sum_{k=1}^{n}D_{0, k}
e^{-A_{0, k}}\mathrm{exp}(-\tilde B^1_{0, k}x-\tilde C^{22}_{0, k}y^2-\tilde C^{33}_{0, k}z^2)\\
\>\>-\displaystyle\sum_{k=1}^{n}(R\gm+1)e^{- A_{0, k}}
\mathrm{exp}(-B^1_{0, k} x-C^{22}_{0, k} y^2)\Bigr]^
+f_{(\Psi^1_{T_0},\Psi^2_{T_0},\Psi^3_{T_0})}(x,y,z)dxdydz\\
\>\>+p(0,T_0)\displaystyle\int_{\bar M^c\times\mathbb
R}\Bigl[\sum_{k=1}^{n}D_{0, k}
e^{-A_{0, k}}\mathrm{exp}(-\tilde B^1_{0, k}x-\tilde C^{22}_{0, k}y^2-\tilde C^{33}_{0, k}z^2)\\
\>\>-\displaystyle\sum_{k=1}^{n}(R\gm+1)e^{- A_{0,
k}}\mathrm{exp}(-B^1_{0, k} x-C^{22}_{0, k} y^2)\Bigr]^
+f_{(\Psi^1_{T_0},\Psi^2_{T_0},\Psi^3_{T_0})}(x,y,z)dxdydz\\
=:P^{1}(0; T_0, T_n, R)+P^{2}(0; T_0, T_n, R).\end{array}\eeq

We can now state and prove the main result of this subsection
consisting in a pricing formula for swaptions for the Gaussian
exponentially quadratic model of this paper.  We have
\begin{prop}\label{P.3.3} Assume that the parameters in the model are such that, if $h_k^3$ belongs to Case (1) in (85)  and $h_k^3 >0$, then  $h_k^3 > \frac{1}{4 (\sigma^3)^2 e^{-2 b^3 T_k}}.$ The arbitrage-free
price at $t=0$ of the swaption with payment dates $T_1 < \cdots <
T_n$ such that $\gm=\gm_{k} := T_{k} - T_{k-1}\>(k=1,\cdots,n)$,
with a given fixed rate $R$ and a notional $N=1$, can be computed as
follows where we distinguish between the Cases 1) and 2) specified
in Definition \ref{DB2}.\medskip

\noindent{\bf Case 1)} We have \beq\label{PSWa}
\begin{array}{l}
P^{Swn}(0; T_0, T_n, R)=p(0,T_0)\Biggl\{\displaystyle\sum_{k=1}^{n}e^{- A_{0, k}}
\Biggl[\displaystyle\int_{\bar M}D_{0,k}\mathrm{exp}(-\tilde B^1_{0,k}x-\tilde C^{22}_{0,k}y^2)\\
\hspace{2cm}\cdot\Biggl(\frac{e^{(\frac{1}{2}(\theta_k)^2\bar\beta^3_{T_0}-\bar\alpha^3_{T_0}\theta_k)}}
{\sqrt{1+2\bar\beta^3_{T_0}\tilde C^{33}_{0,k}}}\Phi(d^1_k(x,y))-e^{-\tilde C^{33}_{0,k}(\bar z^1)^2}\Phi(d^2_k(x,y))\\
+\frac{e^{(\frac{1}{2}(\theta_k)^2\bar\beta^3_{T_0}-\bar\alpha^3_{T_0}\theta_k)}}
{\sqrt{1+2\bar\beta^3_{T_0}\tilde C^{33}_{0,k}}}\Phi(-d^3_k(x,y))-e^{-\tilde C^{33}_{0,k}(\bar z^2)^2}\Phi(-d^4_k(x,y))\Biggr)f_2(y)f_1(x)dydx\\
+\displaystyle\int_{\bar M^c}\Bigl(D_{0,k}e^{-\tilde
B^1_{0,k}x-\tilde C^{22}_{0,k}y^2}
\frac{e^{(\frac{1}{2}(\theta_k)^2\bar\beta^3_{T_0}-\bar\alpha^3_{T_0}\theta_k)}}{\sqrt{1+2\bar\beta^3_{T_0}
\tilde C^{33}_{0,k} }}-(R\gm+1)e^{- B^1_{0, k}x-C^{22}_{0, k}
y^2}\Bigr) f_2(y)f_1(x)dydx\Biggr]\Biggr\}.
\end{array}\eeq
\noindent{\bf Case 2)} We have \beq\label{PSWb}
\begin{array}{l}
P^{Swn}(0; T_0, T_n,
R)=p(0,T_0)\Biggl\{\displaystyle\sum_{k=1}^{n} e^{- A_{0, k}}\\
\Biggl[\displaystyle\int_{\bar M}D_{0,k}\mathrm{exp}(-\tilde
B^1_{0,k}x-\tilde C^{22}_{0,k}y^2)
\Bigl(\frac{e^{(\frac{1}{2}(\theta_k)^2\bar\beta^3_{T_0}-\bar\alpha^3_{T_0}\theta_k)}}{\sqrt{1+2\bar\beta^3_{T_0}\tilde
C^{33}_{0,k}}}
\Bigl[\Phi(d^3_k(x,y))-\Phi(d^1_k(x,y))\Bigr]\\
\hspace{2cm}-e^{-\tilde C^{33}_{0,k}(\bar
z^1)^2}\Bigl[\Phi(d^4_k(x,y))-\Phi(d^2_k(x,y))\Bigr]\Bigr)f_2(y)f_1(x)dydx\\
+\displaystyle\int_{\bar M^c}\Bigl(D_{0,k}e^{-\tilde
B^1_{0,k}x-\tilde C^{22}_{0,k}y^2}
\frac{e^{(\frac{1}{2}(\theta_k)^2\bar\beta^3_{T_0}-\bar\alpha^3_{T_0}\theta_k)}}{\sqrt{1+2\bar\beta^3_{T_0}
\tilde C^{33}_{0,k}}}-(R\gm+1)e^{- B^1_{0, k}x-C^{22}_{0, k}
y^2}\Bigr) f_2(y)f_1(x)dydx\Biggr]\Biggr\}.
\end{array}\eeq

The coefficients in these formulas are as specified in
(\ref{eq:coeff-swap-2}) for $t=T_0$, $f_1(x), f_2(x)$ are the Gaussian
densities corresponding to (\ref{distrib}) for $T=T_0$  and the
functions $d^i_k(x,y)$, for $i=1,\ldots,4$ and $k=1, \ldots, n$,
are given by 
\beq\label{PSWc} \left\{\begin{array}{lcl}
d^1_k(x,y)&:=&\frac{\sqrt{1+2\bar\beta^3_{T_0}\tilde
C^{33}_{0,k}}\bar
z^1(x,y)-(\bar\alpha^3_{T_0}-\theta_k\bar\beta^3_{T_0})}{\sqrt{\bar\beta^3_{T_0}}}\\
d^2_k(x,y)&:=&\frac{\bar
z^1(x,y)-\bar\alpha^3_{T_0}}{\sqrt{\bar\beta^3_{T_0}}}\\
d^3_k(x,y)&:=&\frac{\sqrt{1+2\bar\beta^3_{T_0}\tilde
C^{33}_{0,k}}\bar z^2(x,y)-(\bar\alpha^3_{T_0}
-\theta_k\bar\beta^3_{T_0})}{\sqrt{\bar\beta^3_{T_0}}}\\
d^4_k(x,y)&:=&\frac{\bar
z^2(x,y)-\bar{\alpha}_{T_0}^3}{\sqrt{\bar\beta^3_{T_0}}}\end{array}\right.\eeq
with
$\theta_k:=\frac{\bar{\alpha}^3_{T_0}\left(1-1/\sqrt{1+2\bar{\beta}^3_{T_0}\tilde{C}^{33}_{0,k}}\right)}{\bar{\beta}^3_{T_0}}$,
for $k=1, \ldots, n$,
and where $\bar z^1=\bar z^1(x,y),\,\bar z^2=\bar z^2(x,y)$ are
solutions in $z$ of the equation $g(x,y,z)=h(x,y).$

In addition, the mean and variance values for the Gaussian factors
$(\Psi_{T_0}^1,\Psi_{T_0}^2,\Psi_{T_0}^3)$ are here given by
\begin{equation}\label{PSWd}
\left\{\begin{array}{l}
\bar\alpha_{T_0}^1=e^{-b^1{T_0}}\Psi^1_0-\frac{(\sigma^1)^2}{2(b^1)^2}e^{-b^1{T_0}}(1-e^{2b^1{T_0}})
-\frac{(\sigma^1)^2}{(b^1)^2}(1-e^{b^1{T_0}})\Bigr]\\
\bar\beta_{T_0}^1=e^{-2b^1{T_0}}(e^{2b^1{T_0}}-1)\frac{(\sigma^1)^2}{2(b^1)}\\
\bar\alpha_{T_0}^2=e^{-b^2{T_0}}\Psi^2_0\\
\bar\beta^2_{T_0}=e^{-2b^2{T_0}}\displaystyle\int_0^{T_0}e^{2b^2u+4(\sigma^2)^2\bar C^{22}(u,{T_0})}(\sigma^2)^2du\\
\bar\alpha_{T_0}^3=e^{-b^3{T_0}}\Psi^3_0\\
\bar\beta^3_{T_0}=e^{-2b^3{T_0}}\frac{(\sigma^3)^2}{2b^3}(e^{2b^3{T_0}}-1).
\end{array}\right.
\end{equation}
\end{prop}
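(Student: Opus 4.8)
The plan is to carry over to the swaption the Jamshidian-type device already used for the caplet in Proposition~\ref{P.3.2}: the scalar strike $\tilde R$ is replaced by the $(x,y)$-dependent fixed-leg value $h(x,y)$ of (\ref{T16}), and $g(x,y,z)$ of (\ref{T15}) plays the role of the discounted floating leg. I would start from (\ref{Swn}), which is obtained by inserting the explicit swap price (\ref{eq:swap-price-2}) evaluated at $t=T_0$ into (\ref{swapt}) and writing the $Q^{T_0}$-expectation as a three-dimensional integral against the product Gaussian density (\ref{distrib}) with $T=T_0$; this already splits $P^{Swn}(0;T_0,T_n,R)$ into the $\bar M\times\R$ contribution $P^1$ and the $\bar M^c\times\R$ contribution $P^2$, with $\bar M$ as in Definition~\ref{DB2}. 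The mean--variance formulas (\ref{PSWd}) for $(\Psi_{T_0}^1,\Psi_{T_0}^2,\Psi_{T_0}^3)$ claimed in the proposition are just (\ref{11}) specialised to $T=T_0$.

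The core of the argument is, for each fixed $(x,y)$, to evaluate the inner $z$-integral of $[g(x,y,z)-h(x,y)]^+$. First I would dispose of a convergence point: the Gaussian integrals in $z$ are finite only when $1+2\bar\beta^3_{T_0}\tilde C^{33}_{0,k}>0$ for every $k$. In Case 2) this is automatic, since $\tilde C^{33}_{0,k}=\rho^3(T_0,T_k)>0$ by Lemma~\ref{L.3.1}; in Case 1) one has $\tilde C^{33}_{0,k}<0$ and the inequality must be deduced from the assumption of the proposition ($h^3_k<0$, or $h^3_k>\frac{1}{4(\sigma^3)^2e^{-2b^3T_k}}$) by plugging in the explicit forms $\bar\beta^3_{T_0}=\frac{(\sigma^3)^2}{2b^3}(1-e^{-2b^3T_0})$ and $\rho^3(T_0,T_k)$ from (\ref{eq:coeff-swap-1}) and estimating, in the spirit of Lemma~\ref{Positivity}. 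Then I would use the monotonicity of $z\mapsto g(x,y,z)$ recorded just before Definition~\ref{DB2}, together with the intermediate value theorem, to locate the two roots $\bar z^1(x,y)\le 0\le\bar z^2(x,y)$ of $g(x,y,z)=h(x,y)$ and to pin down where the positive part is supported in $z$: on $\bar M$ it is $\{z\le\bar z^1\}\cup\{z\ge\bar z^2\}$ in Case 1) and the bounded interval $[\bar z^1,\bar z^2]$ in Case 2), while on $\bar M^c$ it is all of $\R$ in Case 1).

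On each such region the $z$-integral reduces to two elementary Gaussian computations against the density $f_3$ of $\Psi^3_{T_0}$ under $Q^{T_0}$ (Gaussian, with mean $\bar\alpha^3_{T_0}$ and variance $\bar\beta^3_{T_0}$, cf.\ (\ref{distrib})). For $\int e^{-\tilde C^{33}_{0,k}z^2}f_3(z)\,dz$ over a half-line (or over $\R$ in the $\bar M^c$ part) I would run the three successive substitutions of (\ref{int11}) — $\zeta:=\sqrt{1+2\bar\beta^3_{T_0}\tilde C^{33}_{0,k}}\,z$, the shift by $\theta_k$, and the normalisation to a standard normal — which yields the prefactor $e^{(\frac{1}{2}(\theta_k)^2\bar\beta^3_{T_0}-\bar\alpha^3_{T_0}\theta_k)}/\sqrt{1+2\bar\beta^3_{T_0}\tilde C^{33}_{0,k}}$ and the tails $\Phi(d^1_k),\,\Phi(-d^3_k)$ with the $d^i_k$ of (\ref{PSWc}); for the constant $\int e^{-\tilde C^{33}_{0,k}(\bar z^i)^2}f_3(z)\,dz$ only the plain Gaussian tail is needed, giving $e^{-\tilde C^{33}_{0,k}(\bar z^i)^2}\Phi(d^2_k)$ and $e^{-\tilde C^{33}_{0,k}(\bar z^i)^2}\Phi(-d^4_k)$ as in (\ref{int111}). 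The fixed-leg subtraction is absorbed summand by summand through the defining identity $\sum_k D_{0,k}e^{-A_{0,k}}e^{-\tilde B^1_{0,k}x-\tilde C^{22}_{0,k}y^2}e^{-\tilde C^{33}_{0,k}(\bar z^i)^2}=g(x,y,\bar z^i)=h(x,y)$. Summing over $k=1,\dots,n$ and recombining the $\bar M$ and $\bar M^c$ pieces then produces (\ref{PSWa}) in Case 1) and (\ref{PSWb}) in Case 2), the only difference being whether the finite interval $[\bar z^1,\bar z^2]$ is excluded (complementary tails $\Phi(d^1_k)+\Phi(-d^3_k)-\cdots$) or retained (differences $\Phi(d^3_k)-\Phi(d^1_k)-\cdots$); the interchanges of $\sum_k$ with the integrals are justified by Tonelli, the integrand being nonnegative after the positive part is taken.

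I expect the main obstacle to be neither the Gaussian integrations, which are routine once set up, nor the convergence check, which is a short estimate as in Lemma~\ref{Positivity}, but rather the bookkeeping: keeping straight which $z$-region carries the option in each case and on $\bar M$ versus $\bar M^c$, and handling the Jamshidian feature that the single pair of roots $\bar z^1,\bar z^2$ — common to all $k$, since they solve $g(x,y,z)=h(x,y)$ for the whole sum and not for any individual summand — nonetheless enters the $k$-th term only through the factor $e^{-\tilde C^{33}_{0,k}(\bar z^i)^2}$.
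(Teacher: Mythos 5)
Your proposal is correct and follows essentially the same route as the paper's proof: the same split of (\ref{Swn}) into the $\bar M\times\R$ and $\bar M^c\times\R$ contributions via Definition \ref{DB2}, the same preliminary verification that $1+2\bar\beta^3_{T_0}\tilde C^{33}_{0,k}\geq 0$ (automatic in Case 2, deduced from the stated parameter assumption in Case 1, in the spirit of Lemma \ref{Positivity}), the same Gaussian change-of-variables computations carried over from (\ref{int11})--(\ref{int112}) to obtain (\ref{fint})--(\ref{fint2}), and the same Jamshidian-type absorption of the fixed leg through $g(x,y,\bar z^i)=h(x,y)$. The only differences are cosmetic (you make the Tonelli interchange and the intermediate-value localisation of $\bar z^1,\bar z^2$ explicit, which the paper leaves implicit in Definition \ref{DB2}).
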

\begin{rem}\label{barRM}
A remark analogous to Remark \ref{RM} applies here too concerning the
sets $\bar M$ and $\bar M^c$.\end{rem}

\begin{proof} First of all notice that, when  $h_k^3  <0$ or $h_k^3 > \frac{1}{4 (\sigma^3)^2 e^{-2 b^3 T_k}}$  in Case (1), this implies 
$1+2\bar\beta^3_{T_0}\tilde C^{33}_{0,k} \geq 0$ (in Case (2) we always have  $1 + 2 \tilde\beta^3_{T_0} \tilde C^{33}_{0, k} \geq 0$). Hence, the square-root of
the latter expression in the various formulas of the statement of
the proposition is well-defined. This can be checked, similarly as
in the proof of Lemma \ref{Positivity}, by direct computation taking
into account the definitions of $\bar \beta^3_{T_0} $ in
\eqref{PSWd} and of $\tilde C^{33}_{0,k}$ in \eqref{eq:coeff-swap-2}
and \eqref{eq:coeff-swap-1} for $t=T_0$.

We come now to the statement for the\smallskip

\noindent{\bf Case 1.} We distinguish between whether $(x,y)\in\bar
M$ or $(x,y)\in\bar M^c$ and compute separately the two 
terms in the last equality in (\ref{Swn}).\medskip

\noindent{\bf i)} For $(x,y)\in \bar M$ we have from Definition
\ref{DB2} that there exist $\bar z^1(x,y)\le 0$ and $\bar
z^2(x,y)\ge 0$ so that, for $z\not\in[\bar z^1,\bar z^2],$ one has
$g(x,y,z)\ge g(x,y,\bar z^i)$. Taking into account that, under
$Q^{T_0}$, the random variables $\Psi_{T_0}^1, \Psi_{T_0}^2,
\Psi_{T_0}^3$ are independent, so that we shall write
$f_{(\Psi^1_{T_0},\Psi^2_{T_0},\Psi^3_{T_0})}(x,y,z)=f_1(x)f_2(y)f_3(z)$
(see also (\ref{distrib}) and the line following it), we obtain
\beq\label{PSW1}
\begin{array}{l}
P^{1}(0; T_0, T_n,
R)=p(0,T_0)\Bigl[\displaystyle\sum_{k=1}^{n}D_{0,k}e^{-A_{0, k}}
\displaystyle\int_{M}\mathrm{exp}(-\tilde B^1_{0,k}x-\tilde C^{22}_{0,k}y^2)\\
\>\cdot\Bigl(\displaystyle\int_{-\infty}^{\bar
z^1(x,y)}\mathrm{exp}(-\tilde C^{33}_{0,k}z^2)f_3(z)dz
-\displaystyle\int_{-\infty}^{\bar z^1(x,y)}\mathrm{exp}(-\tilde C^{33}_{0,k}(\bar z^1)^2)f_3(z)dz\\
\displaystyle +\int_{\bar z^2(x,y)}^{+\infty}\mathrm{exp}(-\tilde
C^{33}_{0,k}z^2)f_3(z)dz-\int_{\bar
z^2(x,y)}^{+\infty}\mathrm{exp}(-\tilde C^{33}_{0,k}(\bar
z^2)^2)f_3(z)dz\Bigr)f_2(y)f_1(x)dydx\Bigr].
\end{array}\eeq
By means of calculations that are completely analogous to those in
the proof of Proposition \ref{P.3.2}, we obtain, corresponding to
(\ref{int11}), (\ref{int111}) and (\ref{int112}) respectively and
with the same meaning of the symbols, the following explicit
expressions for the integrals in the last four lines 
of \eqref{PSW1}, namely
\begin{equation}\label{fint}
\int_{-\infty}^{\bar z^1(x,y)}e^{-\tilde
C^{33}_{0,k}z^2}f_3(z)dz=\frac{e^{(\frac{1}{2}(\theta_k)^2\bar\beta^3_{T_0}-\bar\alpha^3_{T_0}\theta_k)}}{\sqrt{1+2\bar\beta^3_{T_0}\tilde
C^{33}_{0,k}}}\Phi(d^1_k(x,y)),
\end{equation}
\begin{equation}\label{fint1}
\int_{-\infty}^{\bar z^1(x,y)}e^{-\tilde C^{33}_{0,k}(\bar
z^1)^2}f_3(z)dz=e^{-\tilde C^{33}_{0,k}(\bar
z^1)^2}\Phi(d^2_k(x,y)),
\end{equation}
and, similarly,
\begin{equation}\label{fint2}
\begin{split}
&\int_{\bar z^2(x,y)}^{+\infty}e^{-\tilde
C^{33}_{0,k}z^2}f_3(z)dz=\frac{e^{(\frac{1}{2}(\theta_k)^2\bar\beta^3_{T_0}-\bar\alpha^3_{T_0}\theta_k)}}{\sqrt{1+2\bar\beta^3_{T_0}\tilde
C^{33}_{0,k}}}
\Phi(-d^3_k(x,y)),\\\\
&\int^{+\infty}_{\bar z^2(x,y)}e^{-\tilde C^{33}_{0,k}(\bar
z^2)^2}f_3(z)dz=e^{-\tilde C^{33}_{0,k}(\bar z^2)^2}\Phi(-d^4_k(x,y)),
\end{split}
\end{equation}
where the $d^i_k(x,y)$, for $i=1,\ldots,4$ and $k=1, \ldots, n$, are
as specified in (\ref{PSWc}).\medskip

\noindent{\bf ii)} If $(x,y)\in \bar M^c$ then, according to
Definition \ref{DB2} we have $g(x,y,z)\ge g(x,y,0)>h(x,y)$ for all
$z$. Noticing that, analogously to (\ref{fint}),
$$\int_{\mathbb R}e^{-\tilde C^{33}_{0,k}\zeta^2}f_3(\zeta)d\zeta=
\frac{e^{(\frac{1}{2}(\theta_k)^2\bar\beta^3_{T_0}-\bar\alpha^3_{T_0}\theta_k)}}{\sqrt{1+2\bar\beta^3_{T_0}\tilde
C^{33}_{0,k}}}$$ we obtain the following expression \beq\label{PSW2}
\begin{array}{l}
P^{2}(0; T_0, T_n, R)=p(0,T_0)\displaystyle\sum_{k=1}^{n}e^{-A_{0,
k}}\Bigl[\int_{\bar M^c\times\mathbb R}
\Bigl(D_{0,k}e^{-\tilde B^1_{0,k}x-\tilde C^{22}_{0,k} y^2-\tilde C^{33}_{0,k}z^2}\\
\hspace{3cm}-(R\gm+1)e^{-B^1_{0, k} x-
C^{22}_{0, k} y^2}\Bigr)f_3(z)f_2(y)f_1(x)dzdydx\Bigr]\\
=p(0,T_0)\displaystyle\sum_{k=1}^{n}e^{- A_{0,
k}}\Bigl[D_{0,k}\Bigl(\int_{M^c}e^{-\tilde B^1_{0,k}x-\tilde
C^{22}_{0,k}y^2}f_2(y)f_1(x)dydx\Bigr)
\frac{e^{(\frac{1}{2}(\theta_k)^2\bar\beta^3_{T_0}-\bar\alpha^3_{T_0}\theta_k)}}{\sqrt{1+2\bar\beta^3_{T_0}\tilde C^{33}_{0,k}}}\\
\hspace{3cm}-(R\gm+1)\Bigl(\displaystyle\int_{\bar M^c}e^{-B^1_{0,
k} x- C^{22}_{0, k} y^2}f_2(y)f_1(x)dydx\Bigr)\Bigr].
\end{array}\eeq
Adding the two expressions in i) and ii) we obtain the statement for
the Case 1.\smallskip

\noindent{\bf Case 2).} Also for this case we distinguish between
whether $(x,y)\in\bar M$ or $(x,y)\in\bar M^c$ and, again, compute
separately the two  terms in the last equality in (\ref{Swn}).\smallskip

\noindent{\bf i)} For $(x,y)\in \bar M$ we have that there exist
$\bar z^1(x,y)\le 0$ and $\bar z^2(x,y)\ge 0$ so that, contrary to
Case 1), one has $g(x,y,z)\ge g(x,y,\bar z^i)$ when $z\in[\bar
z^1,\bar z^2]$. It follows that \beq\label{PSW3}
\begin{array}{l}
P^1(0; T_0, T_n,
R)=p(0,T_0)\Biggl[\displaystyle\sum_{k=1}^{n}D_{0,k}e^{-A_{0, k}}
\displaystyle\int_{\bar M}\mathrm{exp}(-\tilde B^1_{0,k}x-\tilde C^{22}_{0,k}y^2)\\
\quad \cdot\Biggl(\displaystyle\int_{\bar z^1(x,y)}^{\bar
z^2(x,y)}\mathrm{exp}(-\tilde C^{33}_{0,k}z^2)f_3(z)dz
-\displaystyle\int_{\bar z^1(x,y)}^{\bar
z^2(x,y)}\mathrm{exp}(-\tilde C^{33}_{0,k}(\bar
z^1)^2)f_3(z)dz\Biggr) f_2(y)f_1(x)dydx\Biggr]
\\ \hspace{7cm}\\
=p(0,T_0)\Biggl[\displaystyle\sum_{k=1}^{n}D_{0,k}e^{-A_{0, k}}
\displaystyle\int_{\bar M}\mathrm{exp}(-\tilde B^1_{0,k}x-\tilde C^{22}_{0,k}y^2)\\
\hspace{4cm}\cdot\Biggl(
\frac{e^{(\frac{1}{2}(\theta_k)^2\bar\beta^3_{T_0}-\bar\alpha^3_{T_0}\theta_k)}}{\sqrt{1+2\bar\beta^3_{T_0}\tilde
C^{33}_{0,k}}}\left(\Phi(d^3_k(x,y))-\Phi(d^1_k(x,y))\right)\\
\hspace{1cm}-e^{-\tilde C^{33}_{0,k}(\bar
z^1)^2}\left(\Phi(d^4_k(x,y))-\Phi(d^2_k(x,y))\right)\Biggr)f_2(y)f_1(x)dydx\Biggr],
\end{array}\eeq
where we have made use of (\ref{fint}) and (\ref{fint1}),
(\ref{fint2}).
\bigskip

\noindent{\bf ii)} For $(x,y)\in \bar M^c$ we can conclude exactly
as we did it for Case 1) and, by adding the two expressions in i)
and ii), we obtain the statement also for Case 2). \qed
\end{proof}

\bibliographystyle{plainnat}
\bibliography{ref-cva-ird}

\end{document}